\newtheorem{theorem}{Theorem}[section]
\newtheorem{lemma}[theorem]{Lemma}
\newtheorem{remark}[theorem]{Remark}
\newtheorem{proposition}[theorem]{Proposition}
\newcommand{\norm}[1]{\left\| #1 \right\|}
\newcommand{\RR}{\mathbb{R}}
\newcommand{\calS}{\mathcal{S}}
\newcommand{\lzo}{L^2(\Omega)}
\newcommand{\Htd}{{H^1_\|(\Omega)^d}}
\newcommand{\Hd}{{H^1(\Omega)^d}}
\newcommand{\Hdz}{H^1_0(\Omega)^d}
\newcommand{\lzd}{{L^2(\Omega)^d}}
\newcommand{\dd}{\, \mathrm{d}}
\renewcommand{\div}{\mathrm{div}}
\author{Emily Klass \thanks{Institute of Numerical and Applied Mathematics, University of G\"ottingen, Germany (\texttt{e.klass@math.uni-goettingen.de})}, \and Tram Thi Ngoc Nguyen \thanks{Max Planck Institute for Solar System Research, G\"ottingen, Germany (\texttt{nguyen@mps.mpg.de})},
\and Nilay Cicek \thanks{Institute for Physical Chemistry, University of G\"ottingen, Germany (\texttt{nilay.cicek@uni-goettingen.de})},
\and Yoav G. Pollack,\thanks{Institute for Dynamics of Complex Systems, University of Göttingen, Germany (\texttt{yoavpol@gmail.com})} 
\and Sarah K\"oster \thanks{Institute for X-Ray Physics, University of G\"ottingen, Germany (\texttt{sarah.koester@uni-goettingen.de})}, 
\and Andreas Janshoff \thanks{Institute for Physical Chemistry, University of G\"ottingen, Germany (\texttt{ajansho@gwdg.de})}, 
\and Anne Wald \thanks{Institute of Numerical and Applied Mathematics, University of G\"ottingen, Germany (\texttt{a.wald@math.uni-goettingen.de}),  corresponding author},}
\title{Determination of active forces in actomyosin systems as inverse source problems for the Stokes equation} 
\begin{document}

\maketitle

\begin{abstract}
 The identification of forces and stresses is a central task in biophysics research: Knowledge on forces is key to understanding dynamic processes in active biological systems that are able to self-organize and display emergent properties by converting energy into mechanical work. The aim of this paper is to identify forces generated by a filament-motor network of F-actin and myosin -- actomyosin --  and exerted on the surrounding fluid, therefore causing a fluid flow. In particular, we evaluate optical microscopy data stemming from two different physical settings, confined and non-confined active gels. As a theoretical model, we use the Stokes equation together with an incompressibility condition and suitable boundary conditions reflecting the physical settings. The problem of determining the forces from knowledge on the fluid flow is formulated as an inverse source problem. Due to experimental limitations, only incomplete data are available. We provide a rigorous analysis of the forward problems and the impact of missing data, derive the adjoints of the forward operators needed for regularization, and demonstrate our methods on both synthetic and experimentally measured data.  
\end{abstract}

 \vspace{2ex}

\textbf{Keywords:} 
 inverse problems, source identification, Stokes equation, active matter, regularization

 \vspace{2ex}

 \textbf{MSC numbers:}
%
65J05; 65J10, 65J15, 65J22, 65N21,35R25, 35R30, 92C05


\section{Introduction}

Active systems are collections of subunits that continuously convert chemical energy into directed motion or internal stresses, leading to sustained nonequilibrium dynamics and emergent collective behavior. At scales from single molecules to tissues, this persistent energy consumption breaks detailed balance and gives rise to phenomena such as spontaneous flows, pattern formation, and large‑scale organization that cannot be captured by equilibrium statistical mechanics \cite{Ramaswamy2010, Marchetti2013, Bechinger2016, Gompper2020, teVrugt2025}. Understanding how microscopic activity translates into mesoscopic and macroscopic mechanics is therefore a central goal of active matter research, with implications for soft (smart) materials, synthetic biology and cell biology. In principle, precise knowledge of the subunits' positions and momenta (forces) is required, which, however, can be rather intricate to determine considering the complex nature of biological systems and the restrictions imposed by the quality of the experimental data.

A particularly tractable and biologically relevant class of active materials are actomyosin gels \cite{JULICHER2007, Prost2015, Burla2019}. The cytoskeleton of eukaryotic cells contains actin filaments and myosin motors whose ATP‑driven interactions generate contractile stresses and reorganize filament networks. Reconstituted actomyosin systems and dense cellular cortices both display rich mechanical responses: they are viscoelastic, heterogeneous, and capable of producing sustained internal flows and force generation through motor activity \cite{AbuShah2014, Litschel2021, Liebe2023, Sakamoto2024, Sciortino2025, wollrab16}. The biological implications are immense as these processes enable cells to divide, migrate and organize into larger structures -- tissues -- that themselves show emergent out-of-equilibrium behavior but on a different length scale.  

These properties render actomyosin gels a fundamental, yet minimal and controllable platform for investigating how local biochemical activity produces forces and eventually collective motion.
When confined to finite compartments or embedded in a surrounding fluid, actomyosin networks drive internal, non‑equilibrium flows whose spatial structure reflects the distribution and orientation of active stresses \cite{Liebe2023, Litschel2021, Sciortino2025, Tsai2011}. In droplets or other confined geometries, internal circulations, vortical patterns and symmetry‑breaking flows can emerge; in bulk, active fluids, contractile or extensile activity can trigger spontaneous flows and hydrodynamic instabilities \cite{Bashirzadeh2021, Livne2024}. Importantly, the forces and stresses driving these flows are not directly accessible in most experiments, as probe‑based force measurements are often invasive, can perturb the dynamics under study, and are too local to capture the full spatial distribution of forces.

To gain information on these forces, we adopt an inverse approach. In our experiments, we observe steady, laminar flows using optical microscopy, which is followed by particle image velocimetry (PIV) analysis \cite{PIV1,PIV2} to extract the velocity fields. The relation between these data and the underlying forces that must act within the active medium is described with low‑Reynolds‑number hydrodynamic models that account for spatial heterogeneity in viscosity. This formulation casts the problem as a source identification task for Stokes‑type operators, including Stokes–Brinkman variants when porous or network drag is relevant (see also \cite{Lechleiter13}). These turn out to be affine linear inverse problems with incomplete data. To account for measurement noise when solving problems that are typically ill-posed \cite{hadamard1923,bh00}, regularization methods \cite{benning_burger_2018,ehn00,skhk12} are used, relying on our rigorous mathematical analysis of the inverse problems.

We use two experimental model systems, one based on contractile actomyosin networks with free boundaries (bulk) and one where the gel is reconstituted in water-in-oil droplets (confined). The latter serves a minimal cell model, in which emergent phenomena such as self-propulsion are feasible.

From a mathematical point of view, the two experimental settings considered in this article (confined and bulk) require two distinct mathematical models as detailed in Section \ref{Sec:2}. We present the respective function space settings and give a rigorous analysis, including proofs of the well-posedness of the forward problems in Section \ref{Sec:3} and a discussion on the impact of data limitations resulting from the experiment as described in Section \ref{sec:data_aq}. In view of numerical regularization, we derive the adjoints in Section \ref{sec:inv_prob} and present results for both simulated and experimental data in Section \ref{Sec:Numerics} as a proof-of-concept.

\section{The Stokes equation as a forward model} \label{Sec:2}
In this section, we introduce our notation as well as the underlying physics to give a mathematical description of the relation between the forces representing the activity, in short: active forces, exerted by the actomyosin network, and the resulting flow field, which can be measured experimentally.
We do not aim at describing the protein network structure, but instead only the surrounding fluid. Hence the elastic effects on the fluid are neglected and the flow immediately adjusts to changes in the force $\mathbf{f}$, allowing a time-independent model for each time-instance.

For simplicity and to reflect the available data, we consider a two-dimensional setting, neglecting flow outside the image plane (i.e., the focus plane of the microscope). The model is, however, also valid in three dimensions.\\
Let $\Omega \subset \RR^d$, $d=2$, be a bounded Lipschitz domain with its boundary denoted by $\partial\Omega$. 
We assume that the fluid has, in general, a spatially varying viscosity $\eta$ to provide an additional degree of freedom, reflecting possible inhomogeneities. The active force $\mathbf{f}$, the velocity $\mathbf{v}$, and the pressure $p$ are all spatial quantities, depending on the spatial variable $\mathbf{x}$.   
Their relation can be modeled using the Stokes equation
\begin{equation} \label{eq:stokes_basic}
  - \nabla \cdot \big( \eta D(\mathbf{v}) \big) + \nabla p =  \mathbf{f}, \ \  D(\mathbf{v}):= \frac{1}{2} \left( \nabla \mathbf{v} + (\nabla \mathbf{v})^T \right),
\end{equation}
where $p$ is the pressure inside of the droplet and $D(\mathbf{v})$ is the stress tensor. 

\vspace{1ex}

The differential operator $\nabla$ is applied componentwise, i.e.,
\begin{displaymath}
 \nabla \mathbf{v} = \begin{pmatrix}
             \frac{\partial v_1}{\partial x_1} & \cdots & \frac{\partial v_d}{\partial x_1} \\
             \vdots & \ddots & \vdots \\
             \frac{\partial v_1}{\partial x_d} & \cdots & \frac{\partial v_d}{\partial x_d} 
            \end{pmatrix}.
\end{displaymath}
The viscosity $0<\eta \in L^{\infty}(\Omega)$ is a strictly positive function and $\eta D(\mathbf{v})$ is thus a spatial function with values in $\RR^{d \times d}$. Its divergence is the function $\nabla \cdot \big( \eta D(\mathbf{v}) \big) : \RR^{d\times d} \to \RR^d$.  

\vspace{1ex}

We assume that the active fluid is incompressible, which is modeled by additionally postulating that $ \nabla \cdot \mathbf{v} = 0$.
This incompressibility condition results in $ \nabla \cdot (\nabla \mathbf{v})^T $ being equal to $0$ in every component. 
We note that, if $\eta$ is a constant function with $\eta(\mathbf{x}) \equiv \hat{\eta} > 0$ for all $x \in \Omega$, then $\nabla \cdot \big( \eta D(\mathbf{v}) \big) = \hat{\eta} \Delta \mathbf{v}$ in the distributional sense, where $\Delta$ is the Laplace operator and is applied component-wise.

\paragraph{Notation} 
The normal and tangential components of a vector $\mathbf{v}$ are given as $\mathbf{v}_\perp$ and $\mathbf{v}_{\parallel}$ with the outer normal vector being $\mathbf{n}$. The notation of vector spaces is simplified, so that for example $\Hd := H^1(\Omega, \mathbb{R}^d)$ and $\lzd := L^2(\Omega, \mathbb{R}^d)$. Additionally, ``:'' denotes the element-wise or Hadamard product.

The simplest and therefore most commonly used boundary conditions in connection to the Stokes equation are homogeneous Dirichlet boundary conditions and they are discussed often in literature such as \cite{Bof13,Soh01,John10}. Alternatively, the Stokes operator is also sometimes formulated with Neumann boundary conditions \cite{Mit11}. 

In this paper, we introduce Robin boundary conditions and inhomogeneous Dirichlet boundary conditions, that are also discussed in, e.g., \cite{Ray07} and \cite{Amr11}, respectively, and will be the main focus of the upcoming sections.

\subsection{Fluid flow in actomyosin droplets with Robin boundary conditions}

\begin{wrapfigure}{R}{0.5\textwidth}
    \centering
    \def\svgwidth{200pt}
\begingroup%
  \makeatletter%
  \providecommand\color[2][]{%
    \errmessage{(Inkscape) Color is used for the text in Inkscape, but the package 'color.sty' is not loaded}%
    \renewcommand\color[2][]{}%
  }%
  \providecommand\transparent[1]{%
    \errmessage{(Inkscape) Transparency is used (non-zero) for the text in Inkscape, but the package 'transparent.sty' is not loaded}%
    \renewcommand\transparent[1]{}%
  }%
  \providecommand\rotatebox[2]{#2}%
  \newcommand*\fsize{\dimexpr\f@size pt\relax}%
  \newcommand*\lineheight[1]{\fontsize{\fsize}{#1\fsize}\selectfont}%
  \ifx\svgwidth\undefined%
    \setlength{\unitlength}{448.02782038bp}%
    \ifx\svgscale\undefined%
      \relax%
    \else%
      \setlength{\unitlength}{\unitlength * \real{\svgscale}}%
    \fi%
  \else%
    \setlength{\unitlength}{\svgwidth}%
  \fi%
  \global\let\svgwidth\undefined%
  \global\let\svgscale\undefined%
  \makeatother%
  \begin{picture}(1,0.8319879)%
    \lineheight{1}%
    \setlength\tabcolsep{0pt}%
    \put(0,0){\includegraphics[width=\unitlength,page=1]{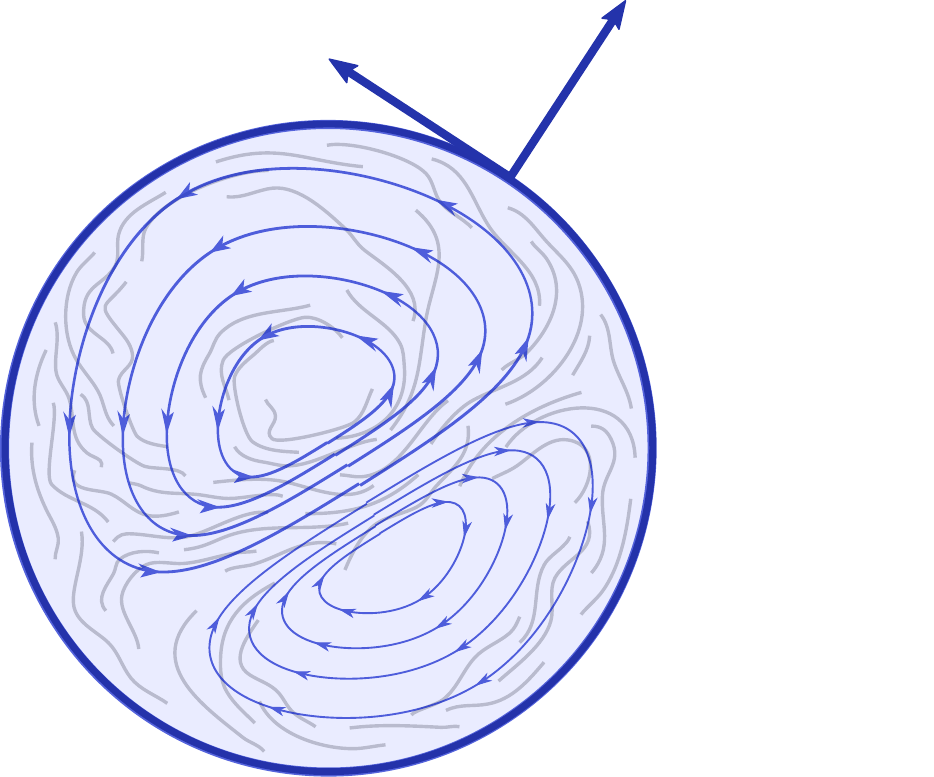}}%
    \put(0.29033461,0.77098222){\color[rgb]{0,0,0}\makebox(0,0)[lt]{\lineheight{1.25}\smash{\begin{tabular}[t]{l}$\mathbf{v}_{\parallel}$\end{tabular}}}}%
    \put(0.67448496,0.78159794){\color[rgb]{0,0,0}\makebox(0,0)[lt]{\lineheight{1.25}\smash{\begin{tabular}[t]{l}$\mathbf{v}_{\bot}$\end{tabular}}}}%
    \put(0.7772908,0.50814399){\color[rgb]{0,0,0}\makebox(0,0)[lt]{\lineheight{1.25}\smash{\begin{tabular}[t]{l}$\partial\Omega$\end{tabular}}}}%
    \put(0.13012509,0.14605201){\color[rgb]{0,0,0}\makebox(0,0)[lt]{\lineheight{1.25}\smash{\begin{tabular}[t]{l}$\Omega$\end{tabular}}}}%
    \put(0,0){\includegraphics[width=\unitlength,page=2]{bc_robin.pdf}}%
  \end{picture}%
\endgroup%

    \caption{Illustration of an actomyosin droplet with domain $\Omega$}
    \label{fig:bc_robin}
\end{wrapfigure}

The droplet is assumed to have a fixed size and shape given by the domain $\Omega$ as a two dimensional disk with boundary $\partial \Omega$. We denote the outward normal of the boundary with $\mathbf{n}$ and  split $\mathbf{v}$ and the co-normal derivative $D(\mathbf{v}) \mathbf{n}$ into a tangential and a normal component, indicated by $\parallel$ and $\bot$ respectively.
Since no fluid leaves or enters the droplet, there is no normal component of the flow field on the boundary, i.e.,
\begin{displaymath}
 \mathbf{v}_{\bot} = \mathbf{0} \quad \text{on } \partial\Omega.
\end{displaymath}
For the tangential part, we postulate Robin-type boundary conditions (slip conditions)
\begin{displaymath} 
 \big(\eta D(\mathbf{v})\mathbf{n} \big)_{\parallel} = -\lambda \mathbf{v}_{\parallel}\quad \text{on } \partial\Omega,
\end{displaymath}

where $\lambda \geq 0$ is the slip length at a rigid boundary and thus a material parameter, which we assume to be known here. The choice of this condition is motivated by the physical setting: According to \cite{KreeZippelius18}, an actomyosin droplet is expected to be able to move or swim through a surrounding fluid, driven by internal active forces. An interaction between the outside fluid and the droplet's fluid is thus expected, but since we do not describe the outside fluid in this simplified model, the use of a slip condition with an appropriate parameter $\lambda$ serves as a suitable replacement. 

The function space for the flow field $\mathbf{v}$ is then given by 
\begin{displaymath}
 H^1_{\parallel}(\Omega)^d := \left\lbrace \mathbf{v} \in H^{1}(\Omega)^{d} \, : \, \mathbf{v}_{\bot} = \mathbf{0} \text{ on } \partial \Omega \right\rbrace .
\end{displaymath}

For the pressure $p$, we use the function space $
 L^2_0(\Omega) :=  \left\lbrace p \in L^2(\Omega) : \int_{\Omega} p \dd x = 0 \right\rbrace
$
to ensure well-definedness of the problem (see Section \ref{Sec:3} for more details).

Altogether we use the Stokes problem
\begin{equation} \label{eq:stokes_problem}
 \begin{split}
  -\nabla \cdot \big( \eta D(\mathbf{v}) \big) + \nabla p &=  \mathbf{f} \quad \text{in } \Omega\\
  \nabla \cdot \mathbf{v} &= 0 \quad \text{in } \Omega\\
  \mathbf{v}_{\bot} &= \mathbf{0} \quad \text{on } \partial\Omega \\
  \big(\eta (\nabla \mathbf{v})\mathbf{n} \big)_{\parallel} &= -\lambda \mathbf{v}_{\parallel}\quad \text{on } \partial\Omega,
 \end{split}
\end{equation}
as a model to describe the relation between the material parameter $\eta$, the pressure $p$, the forces $\mathbf{f}$ and the velocity field $\mathbf{v}$ of the incompressible fluid. 

Before establishing the formulation of our inverse problem, we discuss a different setting motivated by experimental data. 

\subsection{A bulk model for actomyosin-driven flow using Dirichlet boundary conditions\label{sec:bulk}}

\begin{wrapfigure}{R}{0.45\textwidth}
    \centering
    \def\svgwidth{200pt}
\begingroup%
  \makeatletter%
  \providecommand\color[2][]{%
    \errmessage{(Inkscape) Color is used for the text in Inkscape, but the package 'color.sty' is not loaded}%
    \renewcommand\color[2][]{}%
  }%
  \providecommand\transparent[1]{%
    \errmessage{(Inkscape) Transparency is used (non-zero) for the text in Inkscape, but the package 'transparent.sty' is not loaded}%
    \renewcommand\transparent[1]{}%
  }%
  \providecommand\rotatebox[2]{#2}%
  \newcommand*\fsize{\dimexpr\f@size pt\relax}%
  \newcommand*\lineheight[1]{\fontsize{\fsize}{#1\fsize}\selectfont}%
  \ifx\svgwidth\undefined%
    \setlength{\unitlength}{304.49286044bp}%
    \ifx\svgscale\undefined%
      \relax%
    \else%
      \setlength{\unitlength}{\unitlength * \real{\svgscale}}%
    \fi%
  \else%
    \setlength{\unitlength}{\svgwidth}%
  \fi%
  \global\let\svgwidth\undefined%
  \global\let\svgscale\undefined%
  \makeatother%
  \begin{picture}(1,1.0621245)%
    \lineheight{1}%
    \setlength\tabcolsep{0pt}%
    \put(0,0){\includegraphics[width=\unitlength,page=1]{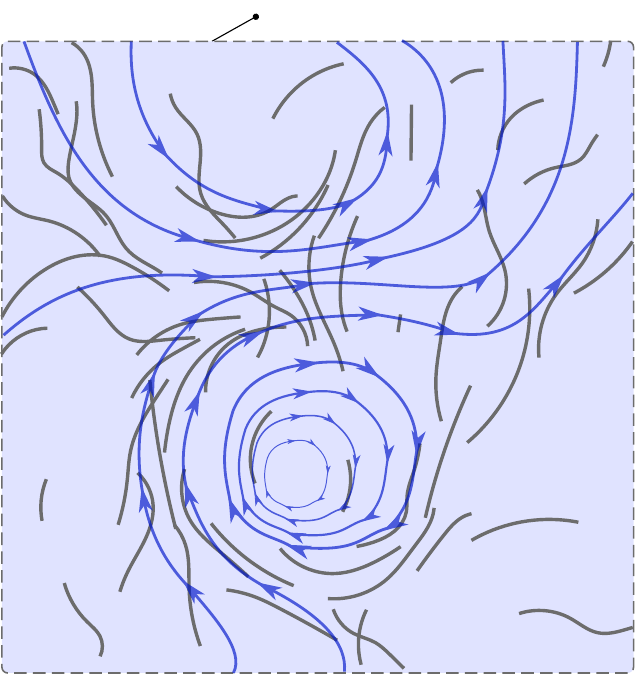}}%
    \put(0.04419055,0.4192164){\color[rgb]{0,0,0}\makebox(0,0)[lt]{\lineheight{1.25}\smash{\begin{tabular}[t]{l}$\Omega$\end{tabular}}}}%
    \put(0.42674817,1.03217309){\color[rgb]{0,0,0}\makebox(0,0)[lt]{\lineheight{1.25}\smash{\begin{tabular}[t]{l}$\partial\Omega$\end{tabular}}}}%
    \put(1.35512976,0.3237264){\color[rgb]{0,0,0}\makebox(0,0)[lt]{\lineheight{1.25}\smash{\begin{tabular}[t]{l}$\partial\Omega$\end{tabular}}}}%
  \end{picture}%
\endgroup%

    \caption{Illustration of an actomyosin network with domain $\Omega$}
    \label{fig:bc_dirichlet}
\end{wrapfigure}

From an experimental point of view it can be useful to measure the velocity field of a network when it is not encapsulated in a droplet or the encapsulation is not visible in the data. Since encapsulating the network into droplets is time-consuming, evaluating the flow of a freely suspended acto-myosin network can be of use to test the network structure and make amendments to the composition of the network and surrounding fluid.

For this version of the problem $\Omega$ is in the shape of a square for $d=2$ or cube for $d=3$. We formulate a Stokes problem with an inhomogeneous Dirichlet boundary condition. That is, we want to find 
$(\mathbf{v},p) \in (\Hd\times L_0^2(\Omega))$ given $\mathbf{f}\in \lzd$ solving
\begin{equation}
\begin{split}
- \nabla \cdot (\eta D(\mathbf{v})) + \nabla p &= \mathbf{f}  \text{ in} \ \Omega\\
\nabla \cdot  \mathbf{v} &= 0  \text{ in} \ \Omega  \\
\mathbf{v} &= \mathbf{g}  \text{ on }  \partial \Omega , 
 \end{split} 
 \end{equation}
for given viscosity $0<\eta\in L^\infty(\Omega)$ and $\mathbf{g}\in H^{\frac{1}{2}}(\partial\Omega)$.

While we expect a rotational flow pattern in the droplets, here it is likely that the rotation-free component is a more prominent part of the overall force. The decomposition into rotation- and divergence-free components of the force will be discussed in section \ref{sec:helmholtz}.

\section{Well-posedness of the forward problems\label{Sec:3}} 
With the proposed model equations, we now examine unique existence of a velocity and pressure field $(\mathbf{v},p)$ given any force $\mathbf{f}$. This guarantees well-definedness for the forward map for the inverse problems in Section \ref{sec:inv_prob}.
We shall discuss both versions of the Stokes models.

\subsection{Well-posedness for the Stokes problem with Robin boundary conditions}

Our analysis is based on weak formulation 
via the bilinear forms
\begin{displaymath}
    a:  H^1_{\parallel}(\Omega)^d \times H^1_{\parallel}(\Omega)^d  \to \RR, \quad
    a(\mathbf{u},\mathbf{v}) :=  \int_{\Omega} \eta D(\mathbf{u}) : D(\mathbf{v}) \dd \mathbf{x} +\int_{\partial\Omega} \mathbf{u}_{\parallel}^T \lambda \mathbf{v}_{\parallel} \dd \boldsymbol{\sigma}  ,
\end{displaymath}
and
\begin{displaymath}
 b: L^2_0(\Omega) \times H^1_{\parallel}(\Omega)^d  \to \RR, \quad \ b(p,\mathbf{u}) := - \int_{\Omega} p (\nabla \cdot\mathbf{u}) \dd \mathbf{x},
\end{displaymath}
as well as the linear form
\begin{displaymath}
 F: H^1_{\parallel}(\Omega)^d \to \RR, \quad \ F(\mathbf{u}) :=  \int_{\Omega} \mathbf{u}^T \mathbf{f} \dd \mathbf{x}.
\end{displaymath}

In our setting, we assume $\eta$ to be known and consider the force-to-solution or Stokes operator 
\begin{equation} \label{eq:S_R}
 \calS_R : L^2(\Omega)^d \to \left( H^1_{\parallel}(\Omega)^d \times L_0^2(\Omega) \right) ,\quad \mathbf{f} \mapsto (\mathbf{v},p),
\end{equation}
that maps the force $\mathbf{f} \in L^2(\Omega)^d$ to the pair of velocity and pressure fields $(\mathbf{v},p) \in H^1_{\parallel}(\Omega)^d \times L_0^2(\Omega)$, which solves the variational formulation 
\begin{equation} \label{eq:stokes_problem_variational}
 \begin{split}
  a(\mathbf{v},\mathbf{u}) + b(p,\mathbf{u}) &= F(\mathbf{u}) \quad \text{for all } \mathbf{u} \in H^1_{\parallel}(\Omega)^d, \\
  b(q,\mathbf{v}) &= 0 \quad \text{for all } q \in L_0^2(\Omega).
 \end{split}
\end{equation}
The subscript $R$ of the operator refers to the Robin boundary condition.

We now provide a well-definedness result for ${\calS}_R$. While our proof is based on the celebrated inf-sup condition as in \cite{Bof13, Hackbusch17, Lechleiter13}, it is necessary to adapt the strategy to the specific boundary condition we are postulating here.

\begin{theorem}[Well-posedness of $\calS_R$] \label{lemma:exuni_Stilde}
The Stokes operator $\calS_R: L^2(\Omega)^d \ni\mathbf{f} \mapsto (\mathbf{v},p)\in H^1_{\parallel}(\Omega)^d \times L_0^2(\Omega)$ satisfying the variational form
 \begin{align}  \int_{\Omega} \eta D(\mathbf{u}) : D(\mathbf{v}) \dd \mathbf{x}
    + \int_{\partial\Omega}\lambda  \mathbf{u}_{\parallel}^T \mathbf{v}_{\parallel} \dd \boldsymbol{\sigma} 
    - \int_{\Omega} p \nabla( \cdot\mathbf{u}) \dd \mathbf{x} &+ \int_{\Omega} q (\nabla \cdot\mathbf{v}) \dd \mathbf{x} = \int_{\Omega} \mathbf{u}^T \mathbf{f} \dd \mathbf{x}\\& 
    \forall \mathbf{u} \in H^1_{\parallel}(\Omega)^d, \forall v\in L^2_0(\Omega) \nonumber
 \end{align}
 is well-defined. Moreover, we have the stability result
 \begin{equation}
     \label{eq:stability}
  \|\mathbf{v}\|_\Htd+\|p\|_{\lzo} \leq C\|\mathbf{f}\|_\lzd   
 \end{equation}
 with some positive constant $C$.
\end{theorem}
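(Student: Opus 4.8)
The plan is to read the combined variational identity as the linear saddle-point system \eqref{eq:stokes_problem_variational} and to apply the classical inf-sup (Brezzi) theory, with the boundary term requiring a mild modification of the coercivity argument. Put $V:=\Htd$ carrying the $\Hd$-norm, $Q:=L^2_0(\Omega)$, and keep $a$, $b$, $F$ as introduced above. Brezzi's theorem then yields a unique pair $(\mathbf v,p)\in V\times Q$ solving \eqref{eq:stokes_problem_variational}, together with the estimate $\|\mathbf v\|_{\Hd}+\|p\|_{\lzo}\le C\,\|F\|_{V^*}$, as soon as four conditions hold: boundedness of $a$ on $V\times V$ and of $b$ on $Q\times V$, coercivity of $a$ on the kernel $Z:=\{\mathbf v\in V:\ b(q,\mathbf v)=0\text{ for all }q\in Q\}$, and the inf-sup condition for $b$. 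Boundedness of $b$ and of $F$ follows from Cauchy--Schwarz, with $|F(\mathbf u)|\le\|\mathbf f\|_{\lzd}\|\mathbf u\|_{\Hd}$; this last bound gives $\|F\|_{V^*}\le\|\mathbf f\|_{\lzd}$ and thereby converts the abstract estimate into \eqref{eq:stability}. Boundedness of $a$ uses $\eta\in\linfo$ for the volume term and continuity of the trace $\Hd\ni\mathbf u\mapsto\mathbf u_\parallel\in\lzdo$, together with $\lambda\in L^\infty(\partial\Omega)$, for the boundary term.

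For the inf-sup condition I would borrow the classical surjectivity of the divergence. On a bounded Lipschitz domain, $\nabla\cdot\colon\Hdz\to L^2_0(\Omega)$ admits a bounded right inverse (Bogovskii's operator; see, e.g., \cite{Bof13}); equivalently, there is $\beta>0$ such that every $q\in L^2_0(\Omega)$ admits $\mathbf u\in\Hdz$ with $-b(q,\mathbf u)\ge\beta\|q\|_{\lzo}\|\mathbf u\|_{\Hd}$. Since $\Hdz\subseteq V$, the same $\mathbf u$ is admissible in $V$, which furnishes the required inf-sup bound for $b$ on $Q\times V$. Note in passing that restricting the pressure to zero mean is exactly what is needed: for $\mathbf u\in V$ one has $\mathbf u_\bot=\mathbf 0$ on $\partial\Omega$, so $b(c,\mathbf u)=-c\int_\Omega\nabla\cdot\mathbf u\dd\mathbf x=-c\int_{\partial\Omega}\mathbf u\cdot\mathbf n\dd\boldsymbol\sigma=0$ for constants $c$, so constant pressures must be factored out.

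The core of the argument, and the step that genuinely feels the boundary condition, is the coercivity of $a$. From
\[
 a(\mathbf v,\mathbf v)=\int_\Omega\eta\,\lvert D(\mathbf v)\rvert^2\dd\mathbf x+\int_{\partial\Omega}\lambda\,\lvert\mathbf v_\parallel\rvert^2\dd\boldsymbol\sigma\ \ge\ \eta_0\,\|D(\mathbf v)\|_{\lzd}^2+\int_{\partial\Omega}\lambda\,\lvert\mathbf v_\parallel\rvert^2\dd\boldsymbol\sigma,
\]
with $\eta_0:=\operatorname{ess\,inf}_\Omega\eta>0$, it suffices to establish a Korn-type inequality
\[
 \|\mathbf v\|_{\Hd}^2\ \le\ C\big(\,\|D(\mathbf v)\|_{\lzd}^2+\|\mathbf v_\parallel\|_{\lzdo}^2\,\big)\qquad\text{for all }\mathbf v\in V .
\]
I would prove this by contradiction: were it false, there would be $(\mathbf v_n)\subset V$ with $\|\mathbf v_n\|_{\Hd}=1$ and $\|D(\mathbf v_n)\|_{\lzd}+\|(\mathbf v_n)_\parallel\|_{\lzdo}\to0$; Rellich's theorem extracts an $\lzd$-convergent subsequence, Korn's second inequality on $\Omega$ upgrades this to a Cauchy sequence in $\Hd$, and the limit $\mathbf v\in V$ (closedness of $V$ in $\Hd$) satisfies $D(\mathbf v)=0$, hence is an infinitesimal rigid motion; moreover $\mathbf v_\parallel=\mathbf 0$ on $\partial\Omega$ in the limit, while $\mathbf v_\bot=\mathbf 0$ on $\partial\Omega$ because $\mathbf v\in V$, so $\mathbf v$ vanishes on $\partial\Omega$; but a rigid motion vanishing on $\partial\Omega$ is identically zero, contradicting $\|\mathbf v\|_{\Hd}=1$. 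Combined with the lower bound on $a(\mathbf v,\mathbf v)$ --- here using that $\lambda$ is bounded below by a positive constant on a subset of $\partial\Omega$ of positive surface measure, so that the slip term controls $\|\mathbf v_\parallel\|_{\lzdo}^2$ up to the term $\|D(\mathbf v)\|_{\lzd}^2$ already present --- this yields $a(\mathbf v,\mathbf v)\ge\alpha\|\mathbf v\|_{\Hd}^2$ for some $\alpha>0$ on all of $V$, a fortiori on $Z$.

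Having verified boundedness, coercivity and the inf-sup condition, Brezzi's theorem delivers the unique solution $(\mathbf v,p)$ asserted in the statement, well-definedness of $\calS_R$, and the bound $\|\mathbf v\|_{\Hd}+\|p\|_{\lzo}\le C\|\mathbf f\|_{\lzd}$, which is \eqref{eq:stability}. I expect the coercivity step to be the main obstacle: the compactness argument must be run carefully (continuity and closedness of the relevant trace maps, the identification of $\ker D(\cdot)$ with rigid motions, and the fact that such a motion vanishing on the boundary is zero), and the slip parameter is essential to it --- if $\lambda$ were allowed to vanish identically, the rotational rigid motions compatible with $\mathbf v_\bot=\mathbf 0$ would lie in the null space of $a$, and one would instead have to pass to the quotient of $V$ by these motions.
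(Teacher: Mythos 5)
Your proof is correct and follows essentially the same route as the paper: saddle-point (Brezzi/inf-sup) theory, with coercivity of $a$ derived from a Korn-type inequality involving the boundary trace, and the inf-sup condition for $b$ from surjectivity of the divergence onto $L^2_0(\Omega)$. The only differences are that you prove the Korn-type inequality by a compactness argument and invoke the Bogovskii operator explicitly, whereas the paper cites Korn's inequality and works with the resulting equivalent norm on $\Htd$; this amounts to filling in standard lemmas rather than a genuinely different argument.
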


\begin{proof}

We define the solution space as $H^1_{\parallel}(\Omega)^d\times  L_0^2(\Omega)$, where $\| \mathbf{v} \|_{\Htd}^2: = \| D(\mathbf{v}) \|_{{\lzo}^{d\times d}}^2 + \norm{\mathbf{v}}^2_{L^2(\partial\Omega)^d}$ is an equivalent norm to the standard $H^1$-norm. Indeed, there exists a constant $c>0$ such that $\norm{\mathbf{v}}_{\Hd}\leq c \norm{\mathbf{v}}_{\Htd}\leq \norm{\mathbf{v}}_{\Hd}$, where the lower bound follows from Korn's inequality  \cite[Lemma A.2(c)]{Lechleiter13}, and the upper bound is an application of the trace theorem on Sobolev spaces.

With this equivalent norm and the positivity of $\eta$ and $\lambda$, we obtain coercivity of the bilinear form $a$ as 
\begin{equation}
\begin{split}
\label{eq:coercivity}
  \left\lvert a(\mathbf{u},\mathbf{u}) \right\rvert 
  &= \left\lvert  \left( \int_{\Omega} \eta D (\mathbf{u}) : D (\mathbf{u}) \dd \mathbf{x} +  \int_{\partial \Omega}\lambda \mathbf{u}_\parallel^T \mathbf{u}_\parallel \dd \boldsymbol{\sigma}   \right)  \right\rvert 
    \geq \min\lbrace\eta,\lambda\rbrace \left\lVert \mathbf{u} \right\rVert_{H^1_{\parallel}(\Omega)^d}^2=:\alpha \left\lVert \mathbf{u} \right\rVert_{H^1_{\parallel}(\Omega)^d}^2.
\end{split}
\end{equation}

For any $q\in L_0^2(\Omega)$, there exists $\mathbf{u}\in \Htd$ such that  $q = \nabla \cdot \mathbf{u}$ and $\left\lVert \mathbf{u} \right\rVert_{H_\parallel^1(\Omega)^d} \leq \frac{1}{\beta} \left\lVert q \right\rVert_{L_0^2(\Omega)}$, thus
\begin{equation}
\label{eq:inf_sup}
\beta \leq \inf_{\norm{q}_{L_0^2(\Omega)}=1} \ \sup_{\norm{\mathbf{u}}_{\Htd} = 1} \frac{\norm{q}_{L_0^2(\Omega)}}{\norm{\mathbf{u}}_{\Htd}}
   = \inf_{\norm{q}_{L_0^2(\Omega)}=1} \sup_{\norm{\mathbf{u}}_{\Htd} = 1}b(\mathbf{u},q)
   \end{equation}
with constant $\beta > 0$ holding for all $q\in L_0^2(\Omega)$. Boundedness of $a$ and $b$ is also clear by similar estimation; see  \eqref{eq:p_stability}.
Boundedness together with the inf-sup condition enables us to claim unique existence of $(\mathbf{v},p)\in H^1_{\parallel}(\Omega)^d\times  L_0^2(\Omega)$ .

As $\mathbf{v}$ is the solution to the Stokes equation it is incompressible, leading to to $b(p,\mathbf{v}) = 0$, hence
\begin{equation}
	\label{eq:v_stability}
    \alpha \| \mathbf{v} \|^2_\Htd  \leq | a(\mathbf{v},\mathbf{v}) |= | F(\mathbf{v}) | \leq \| \mathbf{f} \|_\lzd \| \mathbf{v} \|_\Htd.
\end{equation}

Similarly, we estimate the pressure $p$ using \eqref{eq:stokes_problem_variational}, \eqref{eq:inf_sup} and \eqref{eq:v_stability} as
\begin{equation}
\label{eq:p_stability}
\begin{split}
   \| p \|_\lzd & \leq \frac{1}{\beta} \sup_{\| \mathbf{u} \|_{\Htd}= 1} | b (\mathbf{u},p) |
   = \frac{1}{\beta} \sup_{ \| \mathbf{u} \|_V= 1}  | F (\mathbf{u}) - a(\mathbf{v},\mathbf{u}) | \\
   & \leq \frac{1}{\beta} \sup_{ \| \mathbf{u} \|_{\Htd}= 1} \left( \|\mathbf{u}\|_{\Htd} \|\mathbf{f}\|_\lzd + \max \lbrace \eta, \lambda \rbrace \norm{\mathbf{u}}_{\Htd} \norm{\mathbf{v}}_{\Htd} \right)  \\
   &
   \leq \frac{1}{\beta} \left( \|\mathbf{f}\|_{\lzd} + \frac{\max\lbrace \eta,\lambda\rbrace }{\alpha} \| \mathbf{f} \|_\lzd \right) 
   \end{split}
\end{equation}

The stability result  \eqref{eq:stability} is then achieved with constant $C:= \frac{1}{\alpha}+\frac{1}{\beta}(1+\frac{\max\lbrace \lambda,\eta\rbrace}{\alpha}) >0$ by adding the estimates \eqref{eq:v_stability} and \eqref{eq:p_stability}.
\end{proof}

We note that well-posedness for the Stokes problems for incompressible as well as compressible fluids can also be obtained via the T-coercivity concept introduced by \cite{Cia25}; see also \cite{Hal21}. 

\subsection{Well-posedness for the Stokes problem with Dirichlet boundary conditions}

We now address the Stokes model with inhomogeneous Dirichlet boundary. We begin by introducing a vector function $\mathbf{v}_g \in H^2(\Omega)^d$, such that $\mathbf{v}_g=\mathbf{g}$ on $\partial \Omega$ and that $\mathbf{v}_g$ is incompressible.
Due to the divergence theorem, such $\mathbf{v}_g$ exists if the boundary term $g$ satisfies the \emph{compatibility condition}
\begin{equation}\label{eq:compatibility}
    \int_{\partial \Omega} \mathbf{g}^T \mathbf{n} \dd \boldsymbol{\sigma} = 0 .
\end{equation}
Setting $\mathbf{v}_0:=\mathbf{v}-\mathbf{v}_g$ leads to $\mathbf{v}_0$  satisfying the Stokes equation with homogeneous  boundary
\begin{equation}\label{eq:stokesD_variational_dir}
\begin{split} 
- \nabla \cdot \eta D(\mathbf{v}_0) + \nabla p & 
= \mathbf{f} + \nabla \cdot \eta D(\mathbf{v}_g)\\
 \nabla \cdot  \mathbf{v}_0 &= 0  \text{ in} \ \Omega  \\
\mathbf{v}_0 &=0  \text{ on }  \partial \Omega. 
   \end{split}
\end{equation}
This allows us to define the space for the velocity field $\mathbf{v}_0$ as \\$\Hdz := \lbrace \mathbf{u}\in\Hd : \mathbf{u} = \mathbf{0} \text{ on } \partial \Omega \rbrace$, then the bilinear forms 
\begin{displaymath}
        a:  \Hdz \times \Hdz \to \RR, \quad
    a(\mathbf{u},\mathbf{v_0}) :=  \int_{\Omega} \eta D(\mathbf{u}) : D(\mathbf{v_0}) \dd \mathbf{x}  
\end{displaymath}
and 
\begin{displaymath}
     b:  L^2_0(\Omega) \times \Hdz  \to \RR, \quad \ b(p,\mathbf{u}) := - \int_{\Omega} p (\nabla \cdot\mathbf{u}) \dd \mathbf{x}
\end{displaymath}
as well as the linear form 
\begin{displaymath}
    F: \Hdz \rightarrow \RR,\quad   F(\mathbf{u}) = (\mathbf{f} +\nabla \cdot \eta D(\mathbf{v}_g), \mathbf{u}),
\end{displaymath}
forming weak formulation %
\begin{displaymath} \label{eq:stokes_problem_variational_dir}
 \begin{split}
  a(\mathbf{v}_0,\mathbf{u}) + b(p,\mathbf{u}) &= F(\mathbf{u}) \quad \text{for all } \mathbf{u} \in \Hdz, \\
  b(q,\mathbf{v}_0) &= 0 \quad \text{for all } q \in L^2_0(\Omega)
 \end{split}
\end{displaymath}
We define the source-to-state operator 
as
\begin{displaymath}
    \label{eq:S_D}
     \calS_D : L^2(\Omega)^d \to \left( \Hd \times L_0^2(\Omega) \right),\quad \mathbf{f} \mapsto (\mathbf{v},p) ,
\end{displaymath}
 where $(\mathbf{v}_0,p)$ is the  solution to the variational problem \eqref{eq:stokesD_variational_dir}.

\begin{theorem}[Well-posedness of $\mathcal{S}_D$]\label{th:S_D_well_posed}
The Stokes operator $\calS_D: L^2(\Omega)^d \ni\mathbf{f} \mapsto (\mathbf{v},p)\in H^1(\Omega)^d \times L_0^2(\Omega)$ 
 is well-defined. Moreover, we have the stability result
  \begin{equation}
     \label{eq:stabilityD}
  \|\mathbf{v}\|_\Hd+\|p\|_{\lzo} \leq C\|\mathbf{f}\|_\lzd.   
 \end{equation}
\end{theorem}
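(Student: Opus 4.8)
The plan is to reduce the inhomogeneous Dirichlet problem to the homogeneous one and then invoke essentially the same inf-sup machinery as in Theorem~\ref{lemma:exuni_Stilde}. First I would address the lifting $\mathbf{v}_g$: given $\mathbf{g}\in H^{1/2}(\partial\Omega)$ satisfying the compatibility condition \eqref{eq:compatibility}, one constructs a divergence-free extension $\mathbf{v}_g$ with $\mathbf{v}_g=\mathbf{g}$ on $\partial\Omega$ and $\norm{\mathbf{v}_g}_{\Hd}\le C\norm{\mathbf{g}}_{H^{1/2}(\partial\Omega)}$. This is the classical Bogovskii/solenoidal-lifting result (see, e.g., \cite{Bof13, Soh01}); I would cite it rather than reprove it. A small caveat worth flagging: the text writes $\mathbf{v}_g\in H^2(\Omega)^d$ and uses $\nabla\cdot\eta D(\mathbf{v}_g)$ as a term in $F$, so either one assumes more regularity on $\mathbf{g}$ (and on $\eta$) to make that term an honest $L^2$ function, or — more robustly — one keeps $\mathbf{v}_g\in\Hd$ and reads $F(\mathbf{u})=\langle\mathbf{f},\mathbf{u}\rangle-\int_\Omega \eta D(\mathbf{v}_g):D(\mathbf{u})\dd\mathbf{x}$, which is a bounded linear functional on $\Hdz$ with $\norm{F}_{(\Hdz)^*}\le\norm{\mathbf{f}}_{\lzd}+\norm{\eta}_{\linfo}\norm{\mathbf{v}_g}_{\Hd}$. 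I would adopt the latter reading in the proof.

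Next I would verify the three hypotheses of the inf-sup (Brezzi/Babuška) theorem for the pair $(a,b)$ on $\Hdz\times L^2_0(\Omega)$. Coercivity of $a$ on $\Hdz$: since $\mathbf{u}=0$ on $\partial\Omega$, Korn's inequality \cite[Lemma A.2]{Lechleiter13} together with the Poincaré inequality gives $\int_\Omega\eta D(\mathbf{u}):D(\mathbf{u})\dd\mathbf{x}\ge\eta_{\min}\norm{D(\mathbf{u})}_{\lzdd}^2\ge\alpha\norm{\mathbf{u}}_{\Hd}^2$ for some $\alpha>0$; this is in fact cleaner than the Robin case because no boundary term is needed. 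Boundedness of $a$ is immediate from $\eta\in\linfo$, and boundedness of $b$ from Cauchy–Schwarz. The inf-sup condition for $b$, namely $\inf_{q\in L^2_0(\Omega)}\sup_{\mathbf{u}\in\Hdz}\frac{b(q,\mathbf{u})}{\norm{q}_{\lzo}\norm{\mathbf{u}}_{\Hd}}\ge\beta>0$, is the standard result that $\div:\Hdz\to L^2_0(\Omega)$ is surjective with bounded right inverse — again Bogovskii, \cite{Bof13}. With these in hand, the abstract saddle-point theorem yields a unique $(\mathbf{v}_0,p)\in\Hdz\times L^2_0(\Omega)$, and setting $\mathbf{v}:=\mathbf{v}_0+\mathbf{v}_g$ gives the asserted solution; uniqueness of $(\mathbf{v},p)$ follows from uniqueness of $(\mathbf{v}_0,p)$ once the lifting is fixed, noting that any two admissible liftings differ by an element of $\Hdz$ so the solution $\mathbf{v}$ itself is independent of the choice.

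For the stability estimate \eqref{eq:stabilityD}, I would mirror the computation in \eqref{eq:v_stability}–\eqref{eq:p_stability}. Testing the first equation with $\mathbf{u}=\mathbf{v}_0$ and using $b(p,\mathbf{v}_0)=0$ gives $\alpha\norm{\mathbf{v}_0}_{\Hd}^2\le a(\mathbf{v}_0,\mathbf{v}_0)=F(\mathbf{v}_0)\le\big(\norm{\mathbf{f}}_{\lzd}+\norm{\eta}_{\linfo}\norm{\mathbf{v}_g}_{\Hd}\big)\norm{\mathbf{v}_0}_{\Hd}$, hence $\norm{\mathbf{v}_0}_{\Hd}\le\alpha^{-1}\big(\norm{\mathbf{f}}_{\lzd}+\norm{\eta}_{\linfo}\norm{\mathbf{v}_g}_{\Hd}\big)$. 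Then the inf-sup condition bounds $\norm{p}_{\lzo}\le\beta^{-1}\sup_{\norm{\mathbf{u}}_{\Hd}=1}|b(p,\mathbf{u})|=\beta^{-1}\sup|F(\mathbf{u})-a(\mathbf{v}_0,\mathbf{u})|\le\beta^{-1}\big(\norm{\mathbf{f}}_{\lzd}+\norm{\eta}_{\linfo}\norm{\mathbf{v}_g}_{\Hd}+\norm{\eta}_{\linfo}\norm{\mathbf{v}_0}_{\Hd}\big)$, and finally $\norm{\mathbf{v}}_{\Hd}\le\norm{\mathbf{v}_0}_{\Hd}+\norm{\mathbf{v}_g}_{\Hd}$. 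Folding in $\norm{\mathbf{v}_g}_{\Hd}\le C\norm{\mathbf{g}}_{H^{1/2}(\partial\Omega)}$ gives a bound of the form $\norm{\mathbf{v}}_{\Hd}+\norm{p}_{\lzo}\le C\big(\norm{\mathbf{f}}_{\lzd}+\norm{\mathbf{g}}_{H^{1/2}(\partial\Omega)}\big)$. Strictly speaking the right-hand side of \eqref{eq:stabilityD} as stated only involves $\norm{\mathbf{f}}_{\lzd}$, which is correct only if $\mathbf{g}$ is regarded as fixed data absorbed into the constant $C$; I would state the $\mathbf{g}$-dependent version explicitly, since that is what is needed later for the inverse-problem analysis of the affine operator $\mathbf{f}\mapsto(\mathbf{v},p)$.

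The main obstacle is not any single hard estimate — it is the regularity bookkeeping around $\mathbf{v}_g$ and the term $\nabla\cdot\eta D(\mathbf{v}_g)$. Writing $\eta\in\linfo$ only, one cannot literally differentiate $\eta D(\mathbf{v}_g)$, so the proof must either silently pass to the weak pairing $\int\eta D(\mathbf{v}_g):D(\mathbf{u})$ (the natural fix, which I would use) or impose extra smoothness. Everything else is a faithful transcription of the Brezzi theory already deployed for $\calS_R$, with the simplification that coercivity on $\Hdz$ needs no boundary contribution.
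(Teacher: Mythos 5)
Your proposal is correct and follows essentially the same route as the paper, whose proof simply reduces to the homogeneous problem via the solenoidal lifting $\mathbf{v}_g$ and reruns the inf-sup argument of Theorem \ref{lemma:exuni_Stilde} on $\Hdz \times L^2_0(\Omega)$. Your two caveats — reading the lifting term weakly rather than through $\nabla\cdot\big(\eta D(\mathbf{v}_g)\big)$ with $\eta\in\linfo$, and making the $\mathbf{g}$-dependence of the stability bound explicit — are sound refinements of points the paper leaves implicit.
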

\begin{proof}
The proof is analogous to that of Theorem \ref{lemma:exuni_Stilde}, adopting  the solution space $\Hdz \times L_0(\Omega)^d$ for $(\mathbf{v}_0,p)$ and the shifting $\mathbf{v}:=\mathbf{v}_0+\mathbf{v}_g$.
\end{proof}

\section{Data acquisition and analysis\label{sec:data_aq}}

In the following section we describe the workflow how we obtain the data from our experimental setup.

\begin{figure}[H]
    \centering
    \def\svgwidth{\linewidth}
\begingroup%
  \makeatletter%
  \providecommand\color[2][]{%
    \errmessage{(Inkscape) Color is used for the text in Inkscape, but the package 'color.sty' is not loaded}%
    \renewcommand\color[2][]{}%
  }%
  \providecommand\transparent[1]{%
    \errmessage{(Inkscape) Transparency is used (non-zero) for the text in Inkscape, but the package 'transparent.sty' is not loaded}%
    \renewcommand\transparent[1]{}%
  }%
  \providecommand\rotatebox[2]{#2}%
  \newcommand*\fsize{\dimexpr\f@size pt\relax}%
  \newcommand*\lineheight[1]{\fontsize{\fsize}{#1\fsize}\selectfont}%
  \ifx\svgwidth\undefined%
    \setlength{\unitlength}{474.79008712bp}%
    \ifx\svgscale\undefined%
      \relax%
    \else%
      \setlength{\unitlength}{\unitlength * \real{\svgscale}}%
    \fi%
  \else%
    \setlength{\unitlength}{\svgwidth}%
  \fi%
  \global\let\svgwidth\undefined%
  \global\let\svgscale\undefined%
  \makeatother%
  \begin{picture}(1,0.57199113)%
    \lineheight{1}%
    \setlength\tabcolsep{0pt}%
    \put(0,0){\includegraphics[width=\unitlength,page=1]{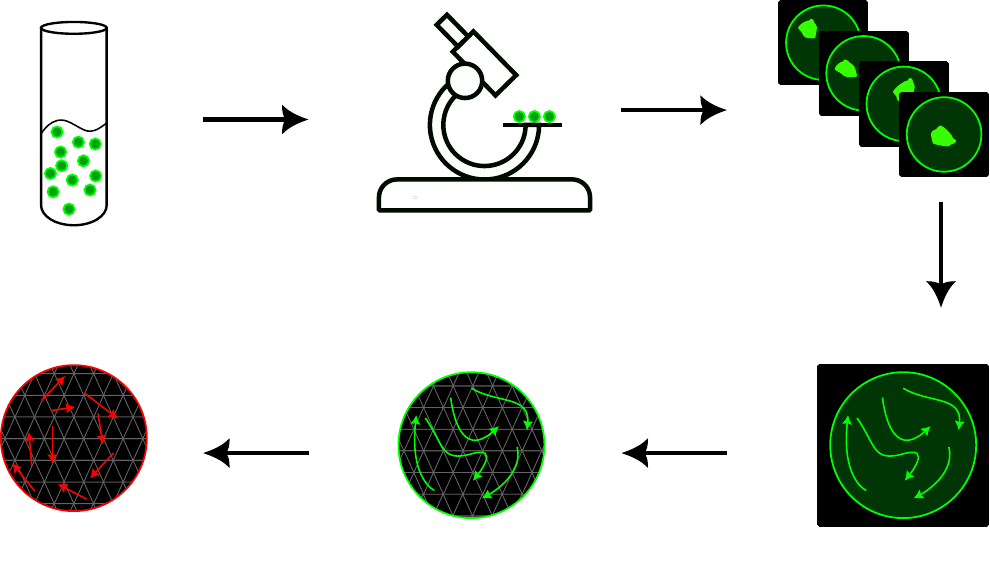}}%
    \put(0.47668413,0.00480783){\makebox(0,0)[t]{\smash{\begin{tabular}[t]{c}$\mathbf{v}^{\operatorname{meas}}$\end{tabular}}}}%
    \put(0.07474659,0.00480783){\makebox(0,0)[t]{\smash{\begin{tabular}[t]{c}$\mathbf{f}^*$\end{tabular}}}}%
    \put(0.84477648,0.30667179){\makebox(0,0)[lt]{\smash{\begin{tabular}[t]{l}PIV\end{tabular}}}}%
    \put(0.68073534,0.17854818){\makebox(0,0)[t]{\smash{\begin{tabular}[t]{c}Finite Element\\Discretization\end{tabular}}}}%
    \put(0.25787317,0.17854818){\makebox(0,0)[t]{\smash{\begin{tabular}[t]{c}Landweber\\Iteration\end{tabular}}}}%
  \end{picture}%
\endgroup%

    \caption{Experiment-to-reconstruction pipeline for actomyosin droplets}
    \label{fig:schematic}
\end{figure}

\begin{wrapfigure}{R}{20em}
        \includegraphics[width=\linewidth]{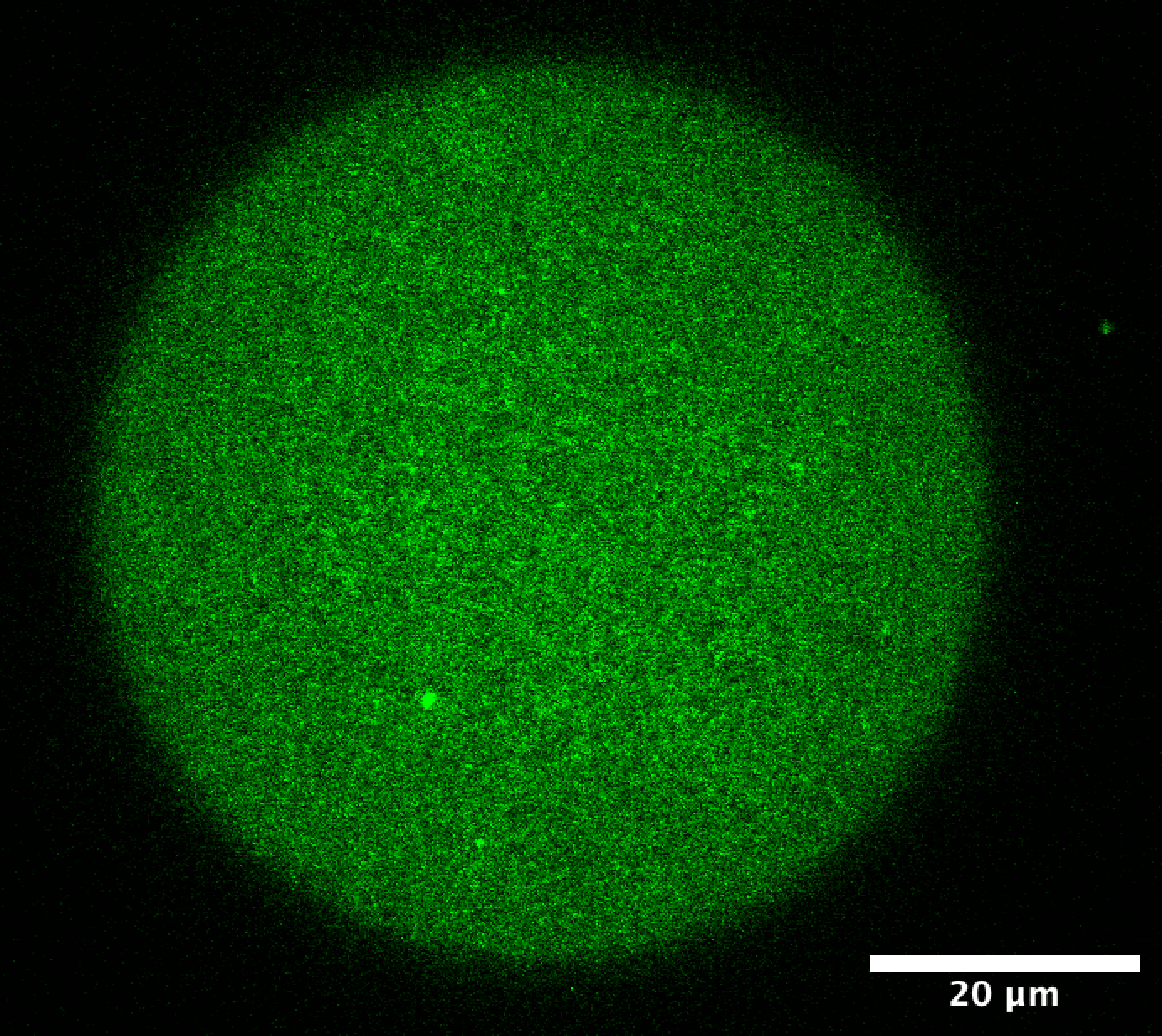}
    \caption{
    Fluorescence image of actin-myosin networks encapsulated in water-in-oil droplets. Actin is fluorescently labelled.} 
\end{wrapfigure}

We prepared actomyosin networks following the protocol described previously \cite{Nietmann2023}. Fluorescently labeled actin filaments and myosin motors were combined to form an entangled, contractile, three‑dimensional network in aqueous buffer solution containing ATP as the chemical energy source. The samples were imaged by time-lapse fluorescence microscopy and the resulting image sequences were analyzed using particle image velocimetry (PIV) to extract the two‑dimensional velocity field of the network (input data). The measured velocity field therefore reports the mesoscale flow of the actomyosin network driven by myosin motor activity in the ATP‑containing buffer, rather than the motion of the solvent alone.
Besides these bulk samples we also encapsulates the contractile networks into water-in-oil droplets to create PIV-data corresponding to the Stokes operator with Robin boundary conditions $\mathcal{S_R}$.

The sample was imaged at 5‑minute intervals beginning immediately after preparation using a confocal laser‑scanning microscope fitted with a 60× water‑immersion objective. Time‑lapse videos were recorded with the Galvano scanner operating in one‑way mode. Comparable experimental setups have been reported previously \cite{Zhang2023,Shimane2022,Pautot2003}, and a detailed description of materials and methods is provided in the Supplementary Information.

Particle Image Velocimetry (PIV) as implemented in the python package \emph{OpenPIV} \cite{alex_liberzon_2020_3930343} was used to extract the 2-dimensional vector flow field.
For preliminary work, the implementation in the software \emph{PIVlab} \cite{Thielicke_2021} was used. PIV deduces the flow field by finding pixel correlations between subsequent frames. This method has previously been used, e.g., for tracking similar active networks in \cite{Sanchez12}. The measured velocity field is used to convert to physical meaningful units depending on the dimensions of the measured area and frame rate.
Since the current study is concerned with a stable steady-state flow, and since the flow in the experiment indeed does not noticeably change throughout the duration of imaging, the entire image stack was considered at once (as an ensemble) producing a single flow field measurement, rather than a time series of flow fields for each frame pair. 
While the actin bundle flow provides a reasonable estimate of the velocity field of the underlying fluid, it does not necessarily adhere to the same fluid dynamics constraints. Particularly, in contrast to the flow of the liquid solution, the flow of the filament network does not need to be incompressible.
However, since the Landweber iteration minimizes the difference between the measured velocity field and the velocity field from the reconstructed force, the method finds a force that produces a flow field of the fluid that is closest of the measured flow field of the actin filaments while still complying to the fluid dynamic constraints such as the incompressibility condition. 

The procedure to obtain the reconstructed force $\mathbf{f}^*$ from the sample containing droplets is sketched in Figure \ref{fig:schematic}.

\subsection{Helmholtz decomposition: divergence-free and curl-free force components \label{sec:helmholtz}}

While data of the velocity flow field can be obtained in the experimental setting described in section \ref{Sec:ExpData}, we do not have data of the pressure field $p$.
This section therefore discusses, which components of the force $\mathbf{f}$ can be reconstructed using only measured data of the velocity field $\mathbf{v}$ for Robin boundary conditions corresponding to the forward operator $S_R$.
We construct two different decompositions of the force $f$: the Helmholtz decomposition into a divergence- and rotation-free component and a decomposition into components dependent on the velocity and pressure terms. We will then see under which assumptions these decompositions coincide.

Let us now assume that the viscosity $\eta$ is constant in $\Omega$ and define the vector spaces
\begin{displaymath}
    V_{\div} (\Omega)^d := \lbrace \mathbf{w} \in L^2(\Omega)^d :  \operatorname{div}(\mathbf{w}) = 0 , \ \mathbf{w} \cdot n = 0 \text{ on } \partial \Omega \text{ in the sense of traces} \rbrace
\end{displaymath}
and
\begin{displaymath}
V_{\operatorname{curl}}(\Omega)^d := \lbrace \mathbf{w} \in \lzd \ : \ \exists\, r \in L^2(\Omega)^{d\times d} \ \text{with} \ \mathbf{w} = - \nabla r \rbrace
\end{displaymath}
with $\left( V_{\div}(\Omega)^d \right)^\perp = V_{\operatorname{curl}}(\Omega)^d $ in $\lzd$.
There exists an $r \in \lzd$, so that the decomposition via the Helmholtz operator $P_{\mathrm{helm}} : \lzd \rightarrow  V_{\div}(\Omega)^d$ is given by  
\begin{displaymath}
    \mathbf{f} = P_{\operatorname{div}} \mathbf{f} - \nabla r,
\end{displaymath}
where $P_{\operatorname{div}} \mathbf{f} \in V_{\div}(\Omega)^d $ is the divergence-free (or solenoidal) and $- \nabla r \in V_{\operatorname{curl}}(\Omega)^d$ rotation-free (or curl-free) component of the force.
Using the coercivity of $a(\mathbf{v},\mathbf{v})$ shown in (\ref{eq:coercivity}) and the incompressibility condition of $\mathbf{v}$ we have
\begin{displaymath} 
    \begin{split}
    \alpha \| \mathbf{v} \|^2_\Htd &\leq | a (\mathbf{v},\mathbf{v}) | = | F(\mathbf{v}) |     \leq \|  P_{\operatorname{div}} \mathbf{f} \|_{\lzd} \| \mathbf{v} \|_{\Htd} ,
    \end{split}
\end{displaymath}
where $(\nabla r,\mathbf{v}) = 0$ since $ \mathbf{v} \in \left( \Htd \cap  V_{\div}(\Omega)^d \right) $.
This provides the continuous dependence of $\mathbf{v}$ on the divergence-free component of the force.

We now want to check if the divergence-free component of the force is equal to the term containing $\mathbf{v}$ and the irrotational component is equal to the term containing $p$.
The Stokes equation is decomposed into terms, one depending on the velocity and one depending on the pressure, so that $\mathbf{f}=\mathbf{f}^{({\mathbf{v}})}+\mathbf{f}^{({p})}$, where
\begin{displaymath}
  \mathbf{f}^{(\mathbf{v})} := - \nabla \cdot \big( \eta D(\mathbf{v}) \big) \ \ \text{and} \ \ \mathbf{f}^{(p)} := \nabla p .
\end{displaymath}
Since the rotation of a gradient-field is always zero, it follows directly that 
$ \nabla \times \mathbf{f}^{(p)} = \nabla \times \left(\nabla p \right) = 0$.
In contrast, the divergence of the velocity term can be rewritten, that is
\begin{equation} \label{eq:divv}
\nabla \cdot \mathbf{f}^{(\mathbf{v})} =  - \nabla \cdot \left( \nabla \cdot \left( \eta D(\mathbf{v}) \right)\right) =  - \left( 2 \nabla \eta \cdot \Delta \mathbf{v} + D(\mathbf{v}) : \nabla(\nabla\eta) \right).
\end{equation}
A detailed calculation is provided in Lemma \ref{lemma:div_appendix}.

Equation \eqref{eq:divv} is zero, if $\nabla\eta=0$, e.g. if the viscosity $\eta$ is constant. 
Then, $\nabla \cdot \mathbf{f}^{(\mathbf{v})}$ is zero and the separation of the velocity and pressure dependent terms of the Stokes equation coincides with the Helmholtz decomposition, so that $\mathbf{f}^{(\mathbf{v})}$ and $\mathbf{f}^{(p)}$ are the divergence-free and irrotational components respectively. If considering solely irrotational components, one can use a stream function representation resulting in a fourth-order equation for $\mathbf{v}$, e.g.~\cite{nguyeninertial}. We do not proceed in this direction, but instead will establish an inversion framework based on the original Stokes equations with incomplete data. 

In the model of the actin network in the experiment's droplet, the gradient of the viscosity corresponds to changes in filament density. In biological cells the density of actin filaments is highly dependent on the cell type and function. However, a typical actin concentration in the experimentally observed droplets is only $12\mu M$ and the volume it takes up is negligible compared to the surrounding fluid. 
We can therefore assume for the experimental data considered here that $\nabla \eta \approx 0$.

If the pressure $p$ is given, it suffices to formulate a reconstruction method for $\mathbf{f}^{(\mathbf{v})}$ from measured data of the velocity field $\mathbf{v}^{\operatorname{meas}}$. 
It is then possible to give a full reconstruction of $\mathbf{f}$ using the gradient of $p$ to recover $\mathbf{f}^{(p)}$. 
Especially in the case of actomyosin droplets, i.e., when a confined fluid is considered, it is valid to assume that the divergence-free component is dominant and allows the determination of the most prominent part of the force $\mathbf{f}$.

\begin{remark}[Source inversion with incomplete data]
    With the forward problems established, we now address the inverse source problem with incomplete data,     referring to the fact that we try to reconstruct the force using only measured velocity and have no information about the pressure. In general, an incomplete data scenario can also arise in the context of statistical data \cite{Evans2002}, imaging with partial measurement \cite{Moscoso2020}, limited observation in inverse scattering \cite{GaoZhang2021}, radial source problems \cite{Hubenthal2011}
    as well as in limited angle tomography \cite{Bubba_2019, Frikel_2013}. 
\end{remark}

\section{Active force identification as an inverse source problem with incomplete data \label{sec:inv_prob}}

We here consider both versions of the Stokes problem in parallel and define a general Stokes operator 
\begin{displaymath}
    \calS \in \lbrace \calS_R, \calS_D \rbrace .
\end{displaymath}
Due to unavoidable measurement noise in experimental data, it is not a realistic assumption on the measured velocity field $\mathbf{v}^{\text{meas}}$ to lie in $\Hd$ and we rather assume that it is set in $\lzd$. Additionally, we assume, that the force $\mathbf{f}$ is in $\Hd$ to allow reconstructions on the boundary. 

We define the embedding operator 
\begin{displaymath}
    E : \left(\Hd \times \lzo\right) \to \left(\lzd \times \lzo \right)
\end{displaymath}
and an operator as a composition of the Stokes operator and the embedding, that is 
\begin{equation}\label{eq:S_embedded}
    S : \Hd  \rightarrow \left(\lzd \times \lzo\right) , \quad\mathbf{f} \mapsto (E \circ \calS ) (\mathbf{f}) ,
\end{equation}
with 
\begin{displaymath}
    S_R:= E \circ \calS_R \quad \text{and} \quad S_D:=E\circ \calS_D,
\end{displaymath}

where $\mathcal{S}$ is well-defined as we have shown in section \ref{Sec:3}.
We will refer to the operator $S$ as the \emph{embedded Stokes operator}.

We define a linear bounded observation operator that only measures the velocity field
\begin{displaymath}
    M : \left( \lzd \times \lzo \right) \rightarrow \lzd , \quad(\mathbf{v}, p ) \mapsto \mathbf{v}
\end{displaymath}
and define the forward operator for the inverse problem
\begin{equation}
    A : \Hd \rightarrow \lzd , \quad\mathbf{f} \mapsto \left( M \circ S \right)(\mathbf{f})  ,
    \label{eq:A_def}
\end{equation}
with
\begin{equation}\label{A}
    A_R := M\circ S_R \quad \text{and} \quad A_D:=M \circ S_D.
\end{equation}

\begin{lemma}[Well posedness of $A$]
    The operator $A : \Hd \rightarrow \lzd$ that maps the force $\mathbf{f}$ to the velocity component $v$ of the Stokes problem defined in \eqref{eq:A_def} is well-posed.
\end{lemma}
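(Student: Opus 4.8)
The plan is to recognise $A$ as a composition of maps each of which has already been shown, or is trivially, well-defined, linear (affine in the Dirichlet case), and bounded, and then to read the stability estimate directly off Theorems~\ref{lemma:exuni_Stilde} and~\ref{th:S_D_well_posed}. Writing $\iota:\Hd\hookrightarrow\lzd$ for the canonical Sobolev embedding, we have the factorisation
\begin{displaymath}
  A \;=\; M \circ E \circ \calS \circ \iota ,
\end{displaymath}
so it suffices to treat the four factors separately: (i) $\iota$ is bounded (continuous embedding of $\Hd$ into $\lzd$); (ii) $\calS\in\{\calS_R,\calS_D\}$ is well-defined and bounded from $\lzd$ into $\Htd\times\lzo$ (resp.\ $\Hd\times\lzo$), which is precisely the content of Theorems~\ref{lemma:exuni_Stilde} and~\ref{th:S_D_well_posed}, noting that the force enters the linear form $F$ only through the term $\int_\Omega \mathbf{u}^T\mathbf{f}\dd\mathbf{x}$, which is continuous in $\mathbf{f}\in\lzd$; (iii) $E$ is bounded because $H^1(\Omega)\hookrightarrow L^2(\Omega)$ and $L^2_0(\Omega)\subset\lzo$; and (iv) $M$ is the bounded linear coordinate projection onto the velocity component.

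Then I would carry out the following steps. First, single-valuedness of $A$: for $\mathbf{f}\in\Hd$, view $\mathbf{f}\in\lzd$ via $\iota$, apply $\calS$ to obtain the unique pair $(\mathbf{v},p)$ (uniqueness established in Section~\ref{Sec:3}), and set $A\mathbf{f}:=\mathbf{v}$. Second, structure: $A_R$ is linear, while $A_D$ is affine linear, $A_D\mathbf{f}=L\mathbf{f}+\mathbf{v}_g$ with $L$ linear, because the Dirichlet solution operator carries the fixed offset from the lifting $\mathbf{v}_g$ of the boundary datum $\mathbf{g}$; this is consistent with the affine-linear nature of the inverse problem announced in the introduction. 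Third, continuity: for any $\mathbf{f}\in\Hd$,
\begin{displaymath}
  \|A\mathbf{f}\|_{\lzd} \;=\; \|\mathbf{v}\|_{\lzd} \;\le\; \|\mathbf{v}\|_{\Hd} \;\le\; \|\mathbf{v}\|_{\Hd}+\|p\|_{\lzo} \;\le\; C\,\|\mathbf{f}\|_{\lzd} \;\le\; C\,\|\mathbf{f}\|_{\Hd},
\end{displaymath}
invoking \eqref{eq:stabilityD} (resp.\ \eqref{eq:stability} with $\|\cdot\|_{\Htd}$, which is equivalent to $\|\cdot\|_{\Hd}$ as recorded in the proof of Theorem~\ref{lemma:exuni_Stilde}) in the penultimate inequality and boundedness of $\iota$ in the last. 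This gives boundedness of the (linear part of the) forward operator and the continuous dependence of $\mathbf{v}$ on $\mathbf{f}$.

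The argument is essentially bookkeeping, so there is no substantial obstacle; the only points requiring a moment's care are (a) confirming that shrinking the domain of $\calS$ from $\lzd$ to the smaller space $\Hd$ does not affect well-definedness — it does not, since $\calS$ is already defined on all of $\lzd$ and $F$ depends continuously on $\mathbf{f}\in\lzd$ — and (b) reading ``well-posed'' for $A_D$ as well-definedness together with bounded (here: bounded-linear-plus-constant) dependence on $\mathbf{f}$ rather than strict linearity. It may be worth remarking, finally, that $A$ is in general \emph{not} injective — curl-free components of $\mathbf{f}$ are invisible to $\mathbf{v}$ when $\nabla\eta=0$, cf.\ Section~\ref{sec:helmholtz} — which is exactly why the regularised inversion in Section~\ref{sec:inv_prob} is needed.
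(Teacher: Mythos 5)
Your argument is correct and follows essentially the same route as the paper: both factor $A$ as the composition $M\circ E\circ \calS$ (you additionally make the embedding $\Hd\hookrightarrow\lzd$ explicit), invoke Theorems~\ref{lemma:exuni_Stilde} and~\ref{th:S_D_well_posed} for the well-posedness of $\calS$, and conclude from the boundedness of the remaining linear factors. Your extra remarks on the affine-linear structure of $A_D$, the explicit chain of norm estimates, and the non-injectivity of $A$ are consistent elaborations rather than a different proof strategy.
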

\begin{proof}
   The well-posedness of $\mathcal{S_R}$ and $\mathcal{S}_D$ are shown in Theorem \ref{lemma:exuni_Stilde} and \ref{th:S_D_well_posed} respectively. 
   Since the measurement operator $M$ and the embedding operator $E$ are linear, bounded, $A$ as the composition of these is well-posed.
\end{proof}

\subsection{Differentiability of the forward maps}

We now derive the analytical tools that allow us to solve the inverse source problem in a stable manner, using a gradient-based regularization methods such as, in our case, the Landweber iteration. 
The core of these  regularized reconstruction methods is the derivation of the derivative and its adjoint. 
This section focuses on derivatives of the embedded Stokes operators $S_R$ and $S_D$ following by their adjoint derivation in Section \ref{sec:adjoints}.
We note that this process was discussed for general elliptic parameter identification problems in \cite{Hoffmann2022} that can also be applied to the Stokes equation. 

\begin{lemma}[Fr\'echet derivative of $S_R$]
 The mapping $S_R$ is Fr\'echet differentiable. The 
 derivative ${S_R}'
 $ in $\mathbf{f}$ is given by
 \begin{displaymath}
  S_R'(\mathbf{f}) : \mathcal{D}(S_R) \to \left(\lzd \times \lzo\right), \quad \mathbf{h} \mapsto (\mathbf{\mathbf{w}},w_p) ,
 \end{displaymath}
 where $\left(\mathbf{\mathbf{w}},w_p\right)$ is the unique weak solution of the boundary value problem
 \begin{equation} \label{eq:stokes_problem_lin}
 \begin{split}
  -\nabla \cdot \big( \eta D(\mathbf{w}) \big) + \nabla w_p &= \mathbf{h} \quad \text{in } \Omega\\
  \nabla \cdot \mathbf{w} &= 0 \quad \text{in } \Omega\\
  \mathbf{w}_{\bot} &= \mathbf{0} \quad \text{on } \partial\Omega \\
  \big(\eta (\nabla \mathbf{w})\mathbf{n} \big)_{\parallel} &= -\lambda \mathbf{w}_{\parallel}\quad \text{on } \partial\Omega.
 \end{split}
\end{equation}
\end{lemma}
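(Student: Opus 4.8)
The plan is to exploit the fact that $S_R$ is itself affine-linear (indeed, for the Robin problem with homogeneous boundary data it is linear) and bounded, so that Fréchet differentiability is essentially automatic, and the main content is identifying the derivative with the solution operator of the linearized problem. Concretely, write $S_R = E \circ \calS_R$, where $\calS_R : \lzd \to H^1_\parallel(\Omega)^d \times L^2_0(\Omega)$ is the force-to-solution map from Theorem \ref{lemma:exuni_Stilde}. I would first recall that $\calS_R$ is a bounded linear operator: linearity follows from the linearity of the variational system \eqref{eq:stokes_problem_variational} in the data $\mathbf{f}$ (uniqueness of solutions guarantees that the sum of the solutions for $\mathbf{f}_1$ and $\mathbf{f}_2$ is the solution for $\mathbf{f}_1+\mathbf{f}_2$, and similarly for scalar multiples), and boundedness is exactly the stability estimate \eqref{eq:stability}. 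Since the embedding $E$ is linear and bounded, $S_R$ is linear and bounded on $\lzd$, hence trivially Fréchet differentiable with $S_R'(\mathbf{f}) = S_R$ for every $\mathbf{f}$; the remainder term in the difference quotient vanishes identically.

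Next I would verify that this derivative is precisely the map $\mathbf{h} \mapsto (\mathbf{w}, w_p)$ described in the statement. The key point is that the linearized problem \eqref{eq:stokes_problem_lin} is, word for word, the original Stokes problem \eqref{eq:stokes_problem} with right-hand side $\mathbf{h}$ in place of $\mathbf{f}$ — this is a consequence of the forward operator being linear in the source, so the "linearization" introduces no new terms. Therefore its unique weak solution $(\mathbf{w}, w_p)$ is exactly $\calS_R(\mathbf{h})$, and after applying the embedding $E$ we get $(\mathbf{w}, w_p) \in \lzd \times \lzo$, so $S_R'(\mathbf{f})\mathbf{h} = (E\circ\calS_R)(\mathbf{h}) = S_R(\mathbf{h})$. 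Well-posedness of \eqref{eq:stokes_problem_lin} — existence, uniqueness and continuous dependence on $\mathbf{h}$ — is inherited directly from Theorem \ref{lemma:exuni_Stilde}, so the derivative is well-defined and bounded.

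I do not expect any genuine obstacle here, since the whole phenomenon is that a bounded linear operator is its own derivative. The only point requiring a little care is bookkeeping of function spaces: the natural codomain of $\calS_R$ is $H^1_\parallel(\Omega)^d \times L^2_0(\Omega)$, whereas the derivative $S_R'(\mathbf{f})$ is stated as mapping into $\lzd \times \lzo$, and one must be explicit that it is the composition with $E$ that produces this weaker target space, and that the domain is $\mathcal{D}(S_R) = \Hd$ (or $\lzd$, depending on which is taken as the domain of $S_R$). One should also note, for completeness, that boundedness of $S_R'(\mathbf{f})$ as an operator from the domain into $\lzd \times \lzo$ — needed later for the adjoint computation in Section \ref{sec:adjoints} — follows from the stability bound \eqref{eq:stability} composed with the continuity of $E$. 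With these identifications in place the proof is complete.
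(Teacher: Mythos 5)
Your argument is correct and is essentially the paper's own (very short) proof: since $S_R$ is linear and bounded, it equals its own Fr\'echet derivative, and the linearized system \eqref{eq:stokes_problem_lin} is literally the original Stokes problem with source $\mathbf{h}$, so well-posedness carries over from Theorem \ref{lemma:exuni_Stilde}. Your additional remarks on the role of the embedding $E$ and the choice of domain are sensible bookkeeping but do not change the substance.
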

\begin{proof}
Since $S_R$ is linear and bounded, the linearzied Stokes equation \eqref{eq:stokes_problem_lin} takes the same form as the original problem; also the G\^ateaux and the Fr\'echet derivatives coincide.

\end{proof}

While the derivative $S_R'$ is identical to $S_R$ for the Robin problem, $S_D$ in the Dirichlet problem is affine-linear. We discuss it in the following. 

\begin{lemma}[Fr\'echet derivative of $S_D$]
 The mapping $S_D$ is Fr\'echet differentiable. The Fr\'echet derivative $S_D'$ in $\mathbf{f}$ is given by
 \begin{equation}
    \label{eq:S_D_frechet}
  S_D'(\mathbf{f}) : \mathcal{D}(S_D) \to \left(\lzd \times \lzo\right), \quad \mathbf{h} \mapsto (\mathbf{w},w_p),
 \end{equation}
 where $\left(\mathbf{\mathbf{w}},w_p\right)$ is the unique weak solution of the boundary value problem
 \begin{displaymath} 
 \begin{split}
  -\nabla \cdot \big( \eta D(\mathbf{w}) \big) + \nabla w_p &= \mathbf{h} \quad \text{in } \Omega\\
  \nabla \cdot \mathbf{w} &= 0 \quad \text{in } \Omega\\
  \mathbf{w} &= \mathbf{0} \quad \text{on } \partial\Omega .
 \end{split}
\end{displaymath}
\end{lemma}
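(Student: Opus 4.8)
The plan is to mimic the structure used for $S_R$, but to account carefully for the affine shift introduced by the inhomogeneous Dirichlet data. First I would recall the decomposition $\mathbf{v} = \mathbf{v}_0 + \mathbf{v}_g$ from Section~\ref{sec:bulk}, where $\mathbf{v}_g \in H^2(\Omega)^d$ is a fixed incompressible lifting of the boundary datum $\mathbf{g}$ (so it does not depend on $\mathbf{f}$), and $(\mathbf{v}_0, p)$ solves the homogeneous-boundary variational problem \eqref{eq:stokesD_variational_dir} with right-hand side $\mathbf{f} + \nabla\cdot(\eta D(\mathbf{v}_g))$. Hence $\calS_D$ splits as $\calS_D(\mathbf{f}) = (\mathbf{v}_g, 0) + \calL(\mathbf{f})$, where $\calL$ is the \emph{linear} solution operator $\mathbf{f}\mapsto (\mathbf{v}_0 + \mathbf{v}_g, p)$ minus the constant $(\mathbf{v}_g,0)$ — more precisely, the linear map sending $\mathbf{f}$ to the part of $(\mathbf{v}_0,p)$ that depends on $\mathbf{f}$. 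The point is that the map $\mathbf{f}\mapsto \mathbf{v}_0$ obtained from \eqref{eq:stokesD_variational_dir} is linear and bounded by Theorem~\ref{th:S_D_well_posed}, because the dependence of its right-hand side on $\mathbf{f}$ is through the linear term $(\mathbf{f},\cdot)$ only; the $\mathbf{v}_g$-contribution to $F$ is a fixed affine offset.

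Next I would compute the derivative directly from this affine-linear structure. Given $\mathbf{f}, \mathbf{h}\in\Hd$ and $t\in\RR$, the difference $\calS_D(\mathbf{f}+t\mathbf{h}) - \calS_D(\mathbf{f})$ is exactly $t$ times the solution $(\mathbf{w},w_p)$ of the variational problem with right-hand side $(\mathbf{h},\cdot)$ and homogeneous Dirichlet boundary data, since the fixed lifting $\mathbf{v}_g$ and the term $\nabla\cdot(\eta D(\mathbf{v}_g))$ cancel in the difference. Composing with the bounded linear embedding $E$ gives $S_D(\mathbf{f}+t\mathbf{h}) - S_D(\mathbf{f}) = t\, E(\mathbf{w},w_p)$, so the G\^ateaux derivative $S_D'(\mathbf{f})\mathbf{h} = E(\mathbf{w},w_p) =: (\mathbf{w},w_p)$ (identifying along the embedding) is precisely the weak solution of the stated homogeneous-boundary Stokes system. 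Because this difference quotient is \emph{exact} (no remainder term at all), the remainder in the Fr\'echet definition vanishes identically, so $S_D$ is Fr\'echet differentiable and the G\^ateaux and Fr\'echet derivatives coincide; the argument is the same observation used in the proof for $S_R$, namely that an affine-linear bounded map is its own derivative up to the constant.

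The only genuinely substantive point to verify is that the linearized problem is itself well-posed, i.e.\ that $(\mathbf{w},w_p)$ exists, is unique, and depends boundedly on $\mathbf{h}\in\Hd$ (indeed already on $\mathbf{h}\in\lzd$). This is immediate: the linearized system is exactly the Dirichlet Stokes problem \eqref{eq:stokesD_variational_dir} with $\mathbf{g}=\mathbf{0}$ (hence $\mathbf{v}_g=\mathbf{0}$) and source $\mathbf{h}$, so Theorem~\ref{th:S_D_well_posed} applies verbatim and yields $\|\mathbf{w}\|_{\Hd} + \|w_p\|_{\lzo} \le C\|\mathbf{h}\|_{\lzd}$. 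I expect the main (mild) obstacle to be purely bookkeeping: keeping the roles of $\mathbf{v}$, $\mathbf{v}_0$, $\mathbf{v}_g$ straight so that it is transparent that the $\mathbf{v}_g$-dependence is a constant offset which drops out upon differencing, and phrasing the identification of $S_D'(\mathbf{f})$ with a genuine Stokes solution operator rather than with $E$ composed with it. No further estimates are needed beyond those already established.
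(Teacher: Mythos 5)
Your proposal is correct and follows essentially the same route as the paper: both arguments exploit the affine-linear structure of $S_D$ (the fixed lifting $\mathbf{v}_g$ cancels upon differencing) so that the Fr\'echet remainder vanishes identically, and both identify the derivative with the homogeneous-Dirichlet Stokes solution operator whose well-posedness is covered by Theorem~\ref{th:S_D_well_posed}. The paper phrases this by observing that the remainder solves the trivial homogeneous system, which is just a more compact version of your explicit cancellation argument.
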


\begin{proof}
The map $S_D$ is affine-linear and bounded. The quantity $(\boldsymbol{\varphi},\varphi_p):=S_D(\mathbf{f}+\mathbf{h}) - S_D(\mathbf{f}) - S_D'(\mathbf{f})\mathbf{h}$ with $S_D'(\mathbf{f})$ as defined  in \eqref{eq:S_D_frechet} solves
 \begin{equation} 
 \begin{split} \label{eq:stokes_trivial}
  -\nabla \cdot \big( \eta D(\boldsymbol{\varphi}) \big) + \nabla \varphi_p &= \mathbf{0} \quad \text{in } \Omega\\
  \nabla \cdot \boldsymbol{\varphi} &= 0 \quad \text{in } \Omega \\
  \boldsymbol{\varphi} &= \mathbf{0} \quad \text{on } \partial\Omega.
 \end{split}
\end{equation}
Since \eqref{eq:stokes_trivial} has a trivial solution, $S_D$ is Fr\'echet differentiable with derivative as in \eqref{eq:S_D_frechet}. 
\end{proof}

\subsection{Adjoint operators\label{sec:adjoints}}
Now in the last step, we construct the adjoint operator for the forward maps $A_R$, $A_D$ via that of $S_R$, $S_D$ and the measurement operator $M$.
We begin with the Robin problem, in which the Fr\'echet derivative $S_R'(\mathbf{f})$ is identical to $S_R$. 

\begin{lemma}[Banach space adjoint of $S_R$]
\label{lemma:banach_robin}
    The Banach adjoint operator of $S_R$ is given by
    \begin{displaymath}
        S_R^{\#}:\left(\lzd \times \lzo\right) \rightarrow \lzd, \quad (\mathbf{w},w_p) \mapsto \boldsymbol{\varphi},
    \end{displaymath} where $\left(\boldsymbol{\varphi},\varphi_p\right)\in \left(\lzd \times \lzo\right)$ solves the adjoint equation
      \begin{equation} \label{eq-adjoint}
 \begin{split}
  -\nabla \cdot \big( \eta D(\boldsymbol{\varphi}) \big) - \nabla \varphi_p &= \mathbf{w} \quad \text{in } \Omega\\
  \nabla \cdot \boldsymbol{\varphi} &= w_p \quad \text{in } \Omega\\
  \boldsymbol{\varphi}_{\bot} &= \mathbf{0} \quad \text{on } \partial\Omega \\
  \big(\eta (\nabla \boldsymbol{\varphi})\mathbf{n} \big)_{\parallel} &= -\lambda \boldsymbol{\varphi}_{\parallel}\quad \text{on } \partial\Omega.
  \end{split}
\end{equation}
\end{lemma}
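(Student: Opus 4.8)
The plan is to compute the Banach-space adjoint $S_R^{\#}$ directly from the defining duality pairing. Since $S_R : \Hd \to (\lzd \times \lzo)$, the adjoint acts $S_R^{\#} : (\lzd \times \lzo)^* \to (\Hd)^*$; I will use the Riesz identifications $(\lzd \times \lzo)^* \cong \lzd \times \lzo$ and work with the $\lzd$–$\Hd$ pairing on the target side, so that $S_R^{\#}$ is realized as a map into $\lzd$ as stated. Concretely, for $\mathbf{f}\in\Hd$ and $(\mathbf{w},w_p)\in\lzd\times\lzo$, write $\calS_R\mathbf{f}=(\mathbf{v},p)$ and define $\boldsymbol\varphi$ via the adjoint system \eqref{eq-adjoint}; the goal is to verify
\begin{displaymath}
 \big\langle S_R\mathbf{f},(\mathbf{w},w_p)\big\rangle_{\lzd\times\lzo}
 = \int_\Omega \mathbf{v}^T\mathbf{w}\dd\mathbf{x} + \int_\Omega p\, w_p\dd\mathbf{x}
 = \big\langle \mathbf{f},\boldsymbol\varphi\big\rangle
 = \int_\Omega \mathbf{f}^T\boldsymbol\varphi\dd\mathbf{x}.
\end{displaymath}

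First I would set up the weak formulations of both the forward system \eqref{eq:stokes_problem} (satisfied by $(\mathbf{v},p)$) and the adjoint system \eqref{eq-adjoint} (satisfied by $(\boldsymbol\varphi,\varphi_p)$), using the same bilinear forms $a$ and $b$ introduced before Theorem \ref{lemma:exuni_Stilde}; note that well-posedness of \eqref{eq-adjoint} follows from Theorem \ref{lemma:exuni_Stilde} applied to the right-hand side data $(\mathbf{w},w_p)$, since the sign flip on $\nabla\varphi_p$ and the inhomogeneous divergence constraint only amount to relabeling the test-function roles. Then I would test the forward momentum equation against $\boldsymbol\varphi\in\Htd$ and the forward divergence equation against $\varphi_p$, and symmetrically test the adjoint momentum equation against $\mathbf{v}$ and the adjoint divergence equation against $p$. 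Subtracting the two resulting identities, the symmetric term $\int_\Omega \eta D(\mathbf{v}):D(\boldsymbol\varphi)$ and the boundary term $\int_{\dO}\lambda\,\mathbf{v}_\parallel^T\boldsymbol\varphi_\parallel$ cancel; the mixed pressure–divergence terms rearrange — this is where the sign choice $-\nabla\varphi_p$ in \eqref{eq-adjoint} is exactly what makes $b(p,\boldsymbol\varphi)$ and $b(\varphi_p,\mathbf{v})$ combine with the constraint $\nabla\cdot\boldsymbol\varphi=w_p$ (and $\nabla\cdot\mathbf{v}=0$) to leave precisely $\int_\Omega p\,w_p$ on one side. What survives is exactly the asserted identity $\int_\Omega \mathbf{v}^T\mathbf{w}+\int_\Omega p\,w_p=\int_\Omega\mathbf{f}^T\boldsymbol\varphi$.

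The main obstacle — and the step requiring care rather than cleverness — is bookkeeping the boundary terms in the two integration-by-parts steps: when testing $-\nabla\cdot(\eta D(\mathbf{v}))$ against $\boldsymbol\varphi$ one picks up $\int_{\dO}(\eta D(\mathbf{v})\mathbf{n})^T\boldsymbol\varphi$, which splits into normal and tangential parts; the normal part vanishes because $\boldsymbol\varphi_\perp=0$ on $\dO$, and the tangential part becomes $-\int_{\dO}\lambda\,\mathbf{v}_\parallel^T\boldsymbol\varphi_\parallel$ by the Robin condition — and one must check this matches the term generated symmetrically from the adjoint side, which it does because both the forward and adjoint problems carry the same Robin coefficient $\lambda$ and the bilinear form $a$ is symmetric. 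A secondary point is that $\mathbf{f}\in\Hd$ has a trace, so the pairing $\langle\mathbf{f},\boldsymbol\varphi\rangle$ could in principle carry boundary contributions, but since $\boldsymbol\varphi\in\lzd$ only the volume integral $\int_\Omega\mathbf{f}^T\boldsymbol\varphi$ appears, consistent with the target of $S_R^{\#}$ being $\lzd$; I would remark that this is why the embedding $\Hd\hookrightarrow\lzd$ is harmless here and the adjoint lands in $\lzd$ as claimed. Finally, uniqueness of $\boldsymbol\varphi$ (hence well-definedness of $S_R^{\#}$ as a map) follows from the well-posedness of \eqref{eq-adjoint}, and boundedness of $S_R^{\#}$ is automatic as the Banach adjoint of the bounded operator $S_R$.
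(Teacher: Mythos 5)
Your proposal follows essentially the same route as the paper's proof: verify the duality identity $\langle \mathbf{w},\mathbf{v}\rangle_{\lzd}+\langle w_p,p\rangle_{\lzo}=\langle\boldsymbol{\varphi},\mathbf{f}\rangle_{\lzd}$ by testing the forward and adjoint systems against each other, integrating by parts, and using $\boldsymbol{\varphi}_\perp=\mathbf{0}$ together with the matching Robin conditions to cancel the boundary terms; the paper phrases this as two integrations by parts on $\langle\boldsymbol{\varphi},\mathbf{h}\rangle$ rather than as a subtraction of two weak formulations, but the computation is identical. One caveat worth flagging: if you actually carry out the bookkeeping with the divergence constraint as written, $\nabla\cdot\boldsymbol{\varphi}=w_p$, the surviving pressure term is $-\int_\Omega p\,w_p$, not $+\int_\Omega p\,w_p$ as you assert — the constraint must read $-\nabla\cdot\boldsymbol{\varphi}=w_p$ (as in the Dirichlet analogue, Lemma \ref{lemma:SD_Banach}) for the claimed identity to hold, a sign slip that is present in the paper's own statement and proof as well, so your argument is fine once that is corrected.
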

\begin{proof}  Take any $(\mathbf{w},w_p)\in \lzd \times \lzo$ and any $\mathbf{h}\in\lzd$. 
    Let $(\mathbf{v},p):=S_R(\mathbf{h})$ and $\left(\boldsymbol{\varphi},\varphi_p\right)$ as in \eqref{eq-adjoint}.
    Our aim is to show that the following identity holds
    \begin{equation}
    \langle (\mathbf{w},w_p),S_R(\mathbf{h})\rangle = 
        \langle \mathbf{w}, \mathbf{v} \rangle_{\lzd} + \langle w_p , p \rangle_{\lzo} = \langle \boldsymbol{\varphi} , \mathbf{h} \rangle =
        \langle S_R^{\#} (\mathbf{w},w_p), \mathbf{h} \rangle_{\lzd}.
        \label{eq:adj_cond}
    \end{equation}
    We proceed by rewriting the right-hand side of \eqref{eq:adj_cond}. Firstly, integration-by-part yields
    \begin{equation}
    \begin{split}
         \langle \boldsymbol{\varphi}, - \nabla \cdot (\eta D(\mathbf{v})) \rangle &= \int_\Omega \eta \nabla \boldsymbol{\varphi} : D(\mathbf{v}) \dd \mathbf{x} - \int_{\partial \Omega} \boldsymbol{\varphi}^{T} (\eta D(\mathbf{v}) \mathbf{n} ) \dd \boldsymbol{\sigma} \\
         & = \int_\Omega \eta D(\boldsymbol{\varphi}) : D(\mathbf{v}) \dd \mathbf{x} + \int_{\partial\Omega} \lambda \boldsymbol{\varphi}_{||}^T \mathbf{v}_{||} \dd \boldsymbol{\sigma} \\
         & = \langle \mathbf{v}, - \nabla \cdot (\eta D(\boldsymbol{\varphi}) )\rangle,
         \label{eq:srad_proof_v}
    \end{split}
    \end{equation}
    where the second equality is obtained with symmetries of $D(\mathbf{v})$ and $D(\boldsymbol{\varphi})$.
    Secondly, using the divergence theorem we have 
    \begin{equation}\label{eq:srad_proof_p}
        \begin{split}
            \langle \boldsymbol{\varphi}, \nabla p\rangle = - \int_\Omega  p (\nabla \cdot\boldsymbol{\varphi}) \dd \mathbf{x} + \int_{\partial\Omega} \boldsymbol{\varphi}^T (p \mathbf{n}) \dd \boldsymbol{\sigma} 
                                = \langle - \nabla \cdot \boldsymbol{\varphi}, p\rangle,
        \end{split}
    \end{equation}
    where $\int_{\partial\Omega} \boldsymbol{\varphi}^T (p \mathbf{n}) \dd \boldsymbol{\sigma} = 0$ as $\boldsymbol{\varphi}_\bot = \mathbf{0}$.
    The incompressibility of the velocity $\mathbf{v}$ allows us to introduce $\varphi_p\in\lzo$ and together with the divergence theorem gives
    \begin{equation}\label{eq:srad_proof_incomp}
     0=(\nabla \cdot \mathbf{v} , \varphi_p)= -(\mathbf{v}, \nabla \varphi_p) .
    \end{equation}
    Summing up \eqref{eq:srad_proof_v}, \eqref{eq:srad_proof_p} and \eqref{eq:srad_proof_incomp} leads to
    \begin{equation}
        \begin{split}
            \langle \boldsymbol{\varphi},\mathbf{h}\rangle 
            &= \langle \boldsymbol{\varphi}, - \nabla \cdot (\eta D(\mathbf{v})) - \nabla p \rangle 
            = \langle \mathbf{v}, - \nabla \cdot (\eta D(\boldsymbol{\varphi}) ) - \nabla \boldsymbol{\varphi} \rangle + \langle \nabla \cdot \boldsymbol{\varphi} , p \rangle \\
            & = \langle \mathbf{v},\mathbf{w}\rangle + \langle p, w_p \rangle.
        \end{split}
    \end{equation}
which is the left-hand side of \eqref{eq:adj_cond}, and the proof is complete.
\end{proof}

For the version of the Stokes equation with Dirichlet boundary conditions $S_D'(\mathbf{f})$, the calculation of the Banach adjoint is analogous to Lemma \ref{lemma:banach_robin} and we omit the proof. 

\begin{lemma}[Banach adjoint of $S_D'(\mathbf{f})$]
    \label{lemma:SD_Banach}
    The Banach adjoint of $S'_D(\mathbf{f})$ is given by
    \begin{displaymath}
        {S'_D(\mathbf{f})}^{\#}: \left(\lzd \times \lzo\right) \rightarrow \lzd, \quad (\mathbf{w},w_p) \mapsto \boldsymbol{\varphi},
    \end{displaymath}
    where $(\boldsymbol{\varphi},\varphi_p)\in \left(\lzd \times \lzo\right)$ solves the adjoint equation
        \begin{displaymath}
    \begin{split}
    -\nabla \cdot \big( \eta D(\boldsymbol{\varphi}) \big) - \nabla \varphi_p &= \mathbf{w} \quad \text{in } \Omega\\
    - \nabla \cdot \boldsymbol{\varphi} &= w_p \quad \text{in } \Omega\\
    \boldsymbol{\varphi} &= \mathbf{0} \quad \text{on } \partial\Omega.
    \end{split}
    \end{displaymath}
\end{lemma}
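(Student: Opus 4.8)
The plan is to mirror the argument of Lemma~\ref{lemma:banach_robin} (the Robin case) almost verbatim, since the only structural changes are (i) the boundary condition $\boldsymbol{\varphi}=\mathbf{0}$ on $\partial\Omega$ replaces the mixed normal/slip condition, and (ii) the primal solution operator $S_D'(\mathbf{f})$ is the linearization of the affine map $S_D$, which by the earlier Fr\'echet-derivative lemma is the \emph{linear} solution operator of the homogeneous-Dirichlet Stokes problem. So first I would fix arbitrary $(\mathbf{w},w_p)\in\lzd\times\lzo$ and $\mathbf{h}\in\lzd$, set $(\mathbf{v},p):=S_D'(\mathbf{f})\mathbf{h}$ (the solution of the Dirichlet-linearized system with right-hand side $\mathbf{h}$ and zero boundary data), and let $(\boldsymbol{\varphi},\varphi_p)$ solve the stated adjoint system. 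The goal identity is
\begin{displaymath}
\langle(\mathbf{w},w_p),S_D'(\mathbf{f})\mathbf{h}\rangle=\langle\mathbf{w},\mathbf{v}\rangle_{\lzd}+\langle w_p,p\rangle_{\lzo}=\langle\boldsymbol{\varphi},\mathbf{h}\rangle_{\lzd}=\langle{S_D'(\mathbf{f})}^{\#}(\mathbf{w},w_p),\mathbf{h}\rangle_{\lzd}.
\end{displaymath}

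The core of the calculation is the same three integration-by-parts steps as in the Robin proof. For the viscous term, $\langle\boldsymbol{\varphi},-\nabla\cdot(\eta D(\mathbf{v}))\rangle=\int_\Omega\eta D(\boldsymbol{\varphi}):D(\mathbf{v})\dd\mathbf{x}-\int_{\partial\Omega}\boldsymbol{\varphi}^T(\eta D(\mathbf{v})\mathbf{n})\dd\boldsymbol{\sigma}$, and here the boundary term vanishes outright because $\boldsymbol{\varphi}=\mathbf{0}$ on $\partial\Omega$ — this is where the proof is actually \emph{simpler} than the Robin case, as no $\lambda$-boundary term appears and the symmetry of $D$ is still used to symmetrize the bulk integral. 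For the pressure term, $\langle\boldsymbol{\varphi},\nabla p\rangle=-\int_\Omega p(\nabla\cdot\boldsymbol{\varphi})\dd\mathbf{x}+\int_{\partial\Omega}\boldsymbol{\varphi}^T(p\mathbf{n})\dd\boldsymbol{\sigma}=\langle-\nabla\cdot\boldsymbol{\varphi},p\rangle$, again with the boundary integral zero since $\boldsymbol{\varphi}_\perp=0$ (indeed $\boldsymbol{\varphi}=0$) on $\partial\Omega$. Finally the incompressibility of $\mathbf{v}$ gives $0=(\nabla\cdot\mathbf{v},\varphi_p)=-(\mathbf{v},\nabla\varphi_p)$ — and here one must be slightly careful that this step uses $\mathbf{v}$ vanishing on the boundary, which holds because $(\mathbf{v},p)=S_D'(\mathbf{f})\mathbf{h}$ solves the \emph{homogeneous}-Dirichlet linearized problem, not the inhomogeneous one; this is the one place where it matters that we adjoint the \emph{derivative} $S_D'(\mathbf{f})$ and not $S_D$ itself. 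Summing the three identities and using $\mathbf{h}=-\nabla\cdot(\eta D(\mathbf{v}))+\nabla p$ collapses the right-hand side to $\langle\mathbf{v},\mathbf{w}\rangle+\langle p,w_p\rangle$, completing the argument.

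The main obstacle — such as it is — is purely one of bookkeeping regularity: to justify the integration-by-parts one needs $\boldsymbol{\varphi}$ and $\mathbf{v}$ in $H^1$, whereas the adjoint system is posed with data $(\mathbf{w},w_p)\in\lzd\times\lzo$ and the solution $(\boldsymbol{\varphi},\varphi_p)$ is only asserted to lie in $\lzd\times\lzo$. As in the Robin proof, this is handled by reading the adjoint equation in its weak (variational) form: well-posedness of the adjoint Stokes problem with homogeneous Dirichlet data follows from Theorem~\ref{th:S_D_well_posed} applied with right-hand side $\mathbf{w}$ and a divergence constraint $-\nabla\cdot\boldsymbol{\varphi}=w_p$ (which, after the usual lifting of the inhomogeneous divergence datum, is again a standard inf-sup argument), so $\boldsymbol{\varphi}\in H^1(\Omega)^d$ and all the integrations above are legitimate; the pairing $\langle\boldsymbol{\varphi},\mathbf{h}\rangle$ is then interpreted as the duality pairing consistent with the variational identity, exactly mirroring the Robin case. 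Since every line is the Dirichlet analogue of a line already written out in Lemma~\ref{lemma:banach_robin}, it is reasonable to state the result and omit the proof, as the paper does.
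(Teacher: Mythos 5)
Your proposal is correct and is precisely the argument the paper has in mind: the paper omits this proof as ``analogous to Lemma~\ref{lemma:banach_robin}'', and you carry out exactly that analogue, with the Robin boundary integrals replaced by terms that vanish because $\boldsymbol{\varphi}$ and $\mathbf{v}$ are zero on $\partial\Omega$, and with the sign $-\nabla\cdot\boldsymbol{\varphi}=w_p$ correctly matching the pairing $\langle-\nabla\cdot\boldsymbol{\varphi},p\rangle=\langle w_p,p\rangle$. Your added remarks on why it is $S_D'(\mathbf{f})$ rather than $S_D$ that is being adjointed, and on the $H^1$ regularity needed to justify the integrations by parts, are accurate refinements of the same route rather than a different one.
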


We can now relate the Banach adjoints to the Hilbert space adjoints via the Riesz isomorphism:

\begin{lemma}[Hilbert space adjoint of $S{'}(\mathbf{f})$]
  The Hilbert space adjoint $S{'}(\mathbf{f})^*$ can be computed from the Banach adjoint  $S'(\mathbf{f})^{\#}$ in Lemmas \ref{lemma:banach_robin} and \ref{lemma:SD_Banach} via
  \begin{displaymath}
      S'(\mathbf{f})^{*}:(\lzd \times \lzo) \rightarrow (\Hd)^*,\quad S'(\mathbf{f})^{*}
            :=  I \circ S'(\mathbf{f})^{\#} 
  \end{displaymath}  
    with the Riesz isomorphism 
    \begin{equation}\label{Riesz}
      I:(\Hd)^* \rightarrow \Hd, \quad I\boldsymbol{\varphi}:=(-\Delta_N +\text{Id})^{-1}\boldsymbol{\varphi} =\mathbf{g} ,
    \end{equation}
    meaning that $\mathbf{g}$ solves
    \begin{equation} \label{eq:Riesz}
    \begin{split} 
       - \Delta \mathbf{g} + \mathbf{g} &= \boldsymbol{\varphi} \ \text{in} \ \Omega\\
    (\nabla \mathbf{g})\cdot\mathbf{n} &= \mathbf{0} \ \text{on} \ \partial \Omega .
    \end{split}
    \end{equation}
\end{lemma}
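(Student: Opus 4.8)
The plan is to derive the claimed formula from the standard relation between the Banach-space adjoint $T^{\#}$ and the Hilbert-space adjoint $T^{*}$ of a bounded linear operator $T\colon X\to Y$ between real Hilbert spaces, namely $T^{*}=R_X^{-1}\circ T^{\#}\circ R_Y$, where $R_X\colon X\to X^{*}$ and $R_Y\colon Y\to Y^{*}$ are the Riesz isometries. Here we take $T=S'(\mathbf{f})$, $X=\Hd$, and $Y=\lzd\times\lzo$, equip $Y$ with the natural product $L^{2}$-inner product and identify it with its own dual (so $R_Y=\mathrm{Id}$), and are left with making the identification on $X$ explicit. The content of the lemma is then precisely that the operator $I$ defined through \eqref{Riesz}--\eqref{eq:Riesz} realizes $R_X^{-1}$ for the \emph{standard} $H^{1}$-inner product on $\Hd$; granting this, $S'(\mathbf{f})^{*}=I\circ S'(\mathbf{f})^{\#}$ is immediate.

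First I would check that \eqref{eq:Riesz} is well posed and that $I$ is the inverse Riesz map. Equipping $\Hd$ with the inner product $\langle\mathbf{g},\mathbf{u}\rangle_{\Hd}=\int_{\Omega}\nabla\mathbf{g}:\nabla\mathbf{u}\dd\mathbf{x}+\int_{\Omega}\mathbf{g}^{T}\mathbf{u}\dd\mathbf{x}$, the weak formulation of \eqref{eq:Riesz} is $\langle\mathbf{g},\mathbf{u}\rangle_{\Hd}=\langle\boldsymbol{\varphi},\mathbf{u}\rangle_{(\Hd)^{*},\Hd}$ for all $\mathbf{u}\in\Hd$, the homogeneous Neumann condition $(\nabla\mathbf{g})\cdot\mathbf{n}=\mathbf{0}$ being the natural boundary condition that integration by parts produces, and the right-hand side reducing to $\int_{\Omega}\boldsymbol{\varphi}^{T}\mathbf{u}\dd\mathbf{x}$ when $\boldsymbol{\varphi}\in\lzd$. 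Since the bilinear form on the left is the $\Hd$-inner product itself, it is bounded and coercive, so the Lax--Milgram lemma (equivalently, the Riesz representation theorem) yields a unique $\mathbf{g}=I\boldsymbol{\varphi}$ depending continuously on $\boldsymbol{\varphi}$; the resulting identity $\langle I\boldsymbol{\varphi},\mathbf{u}\rangle_{\Hd}=\langle\boldsymbol{\varphi},\mathbf{u}\rangle_{(\Hd)^{*},\Hd}$ is exactly the statement that $I=R_X^{-1}$.

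Then I would assemble the adjoint identity: for arbitrary $\mathbf{h}\in\Hd$ and $(\mathbf{w},w_p)\in\lzd\times\lzo$,
\begin{displaymath}
  \bigl\langle S'(\mathbf{f})\mathbf{h},(\mathbf{w},w_p)\bigr\rangle_{\lzd\times\lzo}
  =\bigl\langle S'(\mathbf{f})^{\#}(\mathbf{w},w_p),\mathbf{h}\bigr\rangle_{(\Hd)^{*},\Hd}
  =\bigl\langle I\bigl(S'(\mathbf{f})^{\#}(\mathbf{w},w_p)\bigr),\mathbf{h}\bigr\rangle_{\Hd},
\end{displaymath}
where the first equality is the defining property of the Banach adjoint — with $S'(\mathbf{f})^{\#}(\mathbf{w},w_p)$ identified through Lemma \ref{lemma:banach_robin} in the Robin case and Lemma \ref{lemma:SD_Banach} in the Dirichlet case as the velocity component $\boldsymbol{\varphi}$ of the adjoint Stokes system, read as an element of $(\Hd)^{*}$ via the embedding $\lzd\hookrightarrow(\Hd)^{*}$ — and the second equality is the Riesz identity from the previous step. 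Uniqueness of the Hilbert-space adjoint then gives $S'(\mathbf{f})^{*}=I\circ S'(\mathbf{f})^{\#}$, mapping into $\Hd$ as required by the gradient step of the Landweber iteration.

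The only real obstacle I anticipate is keeping the Gelfand triple $\Hd\hookrightarrow\lzd\hookrightarrow(\Hd)^{*}$ straight: one must remember that $\lzd\times\lzo$ is identified with its dual whereas $\Hd$ is deliberately not, and that the $\boldsymbol{\varphi}\in\lzd$ produced by Lemmas \ref{lemma:banach_robin} and \ref{lemma:SD_Banach} has to be reinterpreted as an element of $(\Hd)^{*}$ before $I$ is applied to it. Everything else reduces to the Riesz representation theorem together with the well-posedness already established in Theorems \ref{lemma:exuni_Stilde} and \ref{th:S_D_well_posed}.
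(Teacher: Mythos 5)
Your proposal is correct and follows essentially the same route as the paper: the Gelfand triple $\Hd\subset\lzd\subset(\Hd)^*$ together with the Riesz representation theorem, with $I=(-\Delta_N+\mathrm{Id})^{-1}$ realizing the inverse Riesz map for the standard $H^1$-inner product. You merely spell out in more detail (via the weak formulation of the Neumann problem and Lax--Milgram) what the paper states in a single chain of identities.
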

\begin{proof}
    Similarity to \cite{sarnighausen2025_2}, we construct the Hilbert space adjoint using the Gelfand triple $\Hd\subset L^2(\Omega)^d\subset (\Hd)^*$ and the Riesz Representation theorem. More precisely, one has
    \begin{align*}
    (S'(\mathbf{f})^{\#} (\mathbf{v},p) ,\mathbf{f} )_\lzd
    &= \langle S'(\mathbf{f})^{\#} (\mathbf{v},p) ,\mathbf{f} \rangle_{(\Hd)^*\times\Hd}\\
    &= (I_\Hd S'(\mathbf{f})^{\#}(\mathbf{v},p),\mathbf{f} )_{\Hd} =: (S'(\mathbf{f})^* (\mathbf{v},p), \mathbf{f})_\Hd 
    \end{align*}
    with the Riesz isomorphism as in \eqref{Riesz}-\eqref{eq:Riesz}.
\end{proof}

In summary, we explicitly obtain the following adjoints:

\begin{proposition}[Adjoint of $A_R$]\label{prop:adj-Robin}
    The Hilbert space adjoint operator of $A_R$ in \eqref{A} is given by
    \begin{displaymath}
        A_R^*:\lzd \rightarrow H^{-1}(\Omega)^d, \quad \mathbf{w} \mapsto \mathbf{g} = I(\boldsymbol{\varphi}),
    \end{displaymath} with the isomorphism $I$ as in \eqref{Riesz}, where $(\boldsymbol{\varphi},\varphi_p) \in (\lzd\times\lzo)$ solves 
          \begin{equation}\label{eq-adj-zerop} 
 \begin{split}
  -\nabla \cdot \big( \eta D(\boldsymbol{\varphi}) \big) + \nabla \varphi_p &= \mathbf{w} \quad \text{in } \Omega\\
  \nabla \cdot \boldsymbol{\varphi} &= 0 \quad \text{in } \Omega\\
  \boldsymbol{\varphi}_{\bot} &= \mathbf{0} \quad \text{on} \ \partial\Omega \\
  \big(\eta (\nabla \boldsymbol{\varphi})\mathbf{n} \big)_{\parallel} &= -\lambda \boldsymbol{\varphi}_{\parallel}\quad \text{on} \ \partial\Omega.
 \end{split}
\end{equation}

\end{proposition}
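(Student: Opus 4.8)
The plan is to obtain $A_R^*$ by composing the adjoints of the three factors making up $A_R = M \circ S_R$, namely $A_R^* = S_R^* \circ M^*$, and then to simplify the resulting adjoint Stokes system using the special structure of $M$. Concretely, $M: \lzd \times \lzo \to \lzd$ is the projection $(\mathbf{v},p)\mapsto\mathbf{v}$, so its Hilbert-space adjoint is the canonical injection $M^*: \lzd \to \lzd\times\lzo$, $\mathbf{w}\mapsto(\mathbf{w},0)$. Therefore $A_R^* \mathbf{w} = S_R^*(M^*\mathbf{w}) = S_R^*(\mathbf{w},0)$, and by the two preceding lemmas this equals $I(\boldsymbol{\varphi})$, where $(\boldsymbol{\varphi},\varphi_p)$ solves the adjoint equation \eqref{eq-adjoint} with right-hand side $(\mathbf{w},w_p)=(\mathbf{w},0)$. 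Substituting $w_p=0$ into \eqref{eq-adjoint} immediately yields the system \eqref{eq-adj-zerop}, with the incompressibility line $\nabla\cdot\boldsymbol{\varphi}=0$ replacing $\nabla\cdot\boldsymbol{\varphi}=w_p$.

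The steps I would carry out, in order, are: (i) recall that $A_R = M\circ S_R$ with $S_R = E\circ\calS_R$, and that $S_R$ is linear and bounded (so $S_R' = S_R$) with Banach adjoint $S_R^\#$ given by Lemma~\ref{lemma:banach_robin}; (ii) identify $M^*$ explicitly as the zero-padding injection $\mathbf{w}\mapsto(\mathbf{w},0)$, which follows at once from $(M(\mathbf{v},p),\mathbf{w})_{\lzd} = (\mathbf{v},\mathbf{w})_{\lzd} = ((\mathbf{v},p),(\mathbf{w},0))_{\lzd\times\lzo}$; (iii) combine with the Hilbert-space adjoint $S_R^* = I\circ S_R^\#$ (from the Riesz-isomorphism lemma, using the Gelfand triple $\Hd\subset\lzd\subset(\Hd)^*$) to get $A_R^* = I\circ S_R^\#\circ M^*$; (iv) evaluate $S_R^\#(\mathbf{w},0)$ by setting $w_p=0$ in \eqref{eq-adjoint}, obtaining precisely \eqref{eq-adj-zerop}; and (v) note the codomain: since $I$ maps $(\Hd)^*$ to $\Hd$ and we are viewing $\lzd$ inside $(\Hd)^*$ via the triple, the stated range $H^{-1}(\Omega)^d$ is identified with $(\Hd)^*$, and $\mathbf{g}=I(\boldsymbol{\varphi})$ solves \eqref{eq:Riesz}. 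This is essentially a bookkeeping argument assembling results already proved.

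There is no genuine obstacle here; the statement is a corollary of Lemmas~\ref{lemma:banach_robin} and the Riesz-isomorphism lemma together with the elementary computation of $M^*$. The only point requiring mild care is the functional-analytic identification of spaces: one must be consistent about whether $A_R$ is regarded as mapping into $\lzd$ or into $(\Hd)^*$, and correspondingly whether its adjoint lands in $\Hd$ or in $(\Hd)^*$. The Proposition as stated takes $A_R^*:\lzd\to H^{-1}(\Omega)^d$, so in the pairing one uses the $\lzd$ inner product on the data side and the $(\Hd)^*\times\Hd$ duality on the parameter side; this is exactly the Gelfand-triple setup invoked in the proof of the Hilbert-space adjoint lemma, so the argument transfers verbatim. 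I would therefore present the proof in two or three lines: write $A_R^* = S_R^*\circ M^*$, compute $M^*\mathbf{w}=(\mathbf{w},0)$, and apply Lemma~\ref{lemma:banach_robin} with $w_p=0$ followed by the Riesz isomorphism $I$, which produces \eqref{eq-adj-zerop} and $\mathbf{g}=I(\boldsymbol{\varphi})$ as claimed.
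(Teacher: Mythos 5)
Your proposal is correct and follows essentially the same route as the paper: identify $M^*$ as the zero-extension $\mathbf{w}\mapsto(\mathbf{w},0)$ via the elementary inner-product computation, compose with the Hilbert-space adjoint $S_R^* = I\circ S_R^{\#}$ obtained from Lemma \ref{lemma:banach_robin} and the Riesz-isomorphism lemma, and read off the adjoint Stokes system with $w_p=0$. The only cosmetic point is that inserting $w_p=0$ into \eqref{eq-adjoint} literally gives $-\nabla\varphi_p$ rather than $+\nabla\varphi_p$ as in \eqref{eq-adj-zerop}, but this is just the relabeling $\varphi_p\mapsto-\varphi_p$ and leaves $\boldsymbol{\varphi}$, and hence $A_R^*\mathbf{w}=I(\boldsymbol{\varphi})$, unchanged.
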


\begin{proof}
For all $(\mathbf{w},q)\in (\lzd \times \lzo ) $ it holds that
\begin{displaymath}
        (M(\mathbf{w},q),\mathbf{w}) = (\mathbf{w},\mathbf{w})  
    = (\mathbf{w},\mathbf{w}) + (q,0)  
    = ((\mathbf{w},q),M^{*}(\mathbf{w}))_{\lzd\times\lzo}
    \end{displaymath}
thus the adjoint of the restricted measurement of $M$ is the zero extension operator
    \begin{displaymath}
    M^* : \lzd \rightarrow (\lzd \times \lzo ) , \quad \mathbf{w} \mapsto (\mathbf{w},0).
\end{displaymath}
The adjoint for the operator $A$ is the composition of the adjoint of the embedded Stokes operator $S'(\mathbf{f})^*$ and the adjoint measurment $M^*$. More precisely, we obtain the Hilbert space adjoint of $A$ 
\begin{displaymath}
    A^*: \lzd \rightarrow \Hd  , \quad \mathbf{w} \mapsto (S'(\mathbf{f})^* \circ M^*) (\mathbf{w}) =  S'(\mathbf{f})^*(\mathbf{w},0).
\end{displaymath}
with $(\mathbf{w},0)$ entering the right hand side of the adjoint equation \eqref{eq-adj-zerop}.
\end{proof}

\begin{proposition}[Adjoint of $A_D'(\mathbf{f})$]
    The Hilbert space adjoint operator of $A_D'(\mathbf{f})$ with $A_D$ as in \eqref{A} is given by    \begin{displaymath}
        A'_D(\mathbf{f})^*:\lzd \rightarrow H^{-1}(\Omega)^d, \quad \mathbf{w} \mapsto \mathbf{g} = I(\boldsymbol{\varphi}),
    \end{displaymath}
    with the isomorphism $I$ as in \eqref{Riesz},  where $(\boldsymbol{\varphi},\varphi_p) \in (\lzd\times\lzo)$ solves
          \begin{displaymath} 
 \begin{split}
  -\nabla \cdot \big( \eta D(\boldsymbol{\varphi}) \big) + \nabla \varphi_p &= \mathbf{w} \quad \text{in } \Omega\\
  \nabla \cdot \boldsymbol{\varphi} &= 0 \quad \text{in } \Omega\\
  \boldsymbol{\varphi} &= \mathbf{0} \quad \text{on} \ \partial \Omega .
 \end{split}
\end{displaymath}
\end{proposition}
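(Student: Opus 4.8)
The plan is to obtain $A_D'(\mathbf{f})^*$ as a composition of adjoints that are already available in this section, in complete parallel to the proof of Proposition \ref{prop:adj-Robin}. Since $A_D = M\circ S_D$ with $M$ linear and bounded, the Fr\'echet derivative satisfies $A_D'(\mathbf{f}) = M\circ S_D'(\mathbf{f})$, hence $A_D'(\mathbf{f})^* = S_D'(\mathbf{f})^*\circ M^*$. The Hilbert space adjoint $S_D'(\mathbf{f})^*$ is, by the Gelfand-triple / Riesz construction given above, equal to $I\circ S_D'(\mathbf{f})^{\#}$, where $S_D'(\mathbf{f})^{\#}$ is the Banach adjoint from Lemma \ref{lemma:SD_Banach} and $I$ is the Riesz isomorphism of \eqref{Riesz}--\eqref{eq:Riesz}. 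So the entire argument reduces to chaining three known building blocks, $A_D'(\mathbf{f})^* = I\circ S_D'(\mathbf{f})^{\#}\circ M^*$, and then reading off the resulting boundary value problem.

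Carrying this out, I would first recall from the proof of Proposition \ref{prop:adj-Robin} that the restricted measurement operator has adjoint $M^*:\lzd\to(\lzd\times\lzo)$, $\mathbf{w}\mapsto(\mathbf{w},0)$ --- an identity independent of the choice of Stokes model. Substituting $(\mathbf{w},0)$ into $S_D'(\mathbf{f})^{\#}$ of Lemma \ref{lemma:SD_Banach}, the right-hand side of the adjoint Stokes system becomes $\mathbf{w}$, the divergence constraint $-\nabla\cdot\boldsymbol{\varphi}=w_p$ collapses to $\nabla\cdot\boldsymbol{\varphi}=0$, and the homogeneous Dirichlet condition $\boldsymbol{\varphi}=\mathbf{0}$ on $\partial\Omega$ remains. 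Composing with $I$ then converts this Banach adjoint into the claimed Hilbert space adjoint $\mathbf{w}\mapsto\mathbf{g}=I(\boldsymbol{\varphi})$, with $\boldsymbol{\varphi}$ solving the stated system after the cosmetic relabelling $\varphi_p\mapsto-\varphi_p$ (the same sign adjustment made between \eqref{eq-adjoint} and \eqref{eq-adj-zerop}).

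If a self-contained verification is preferred over invoking Lemma \ref{lemma:SD_Banach}, one can repeat the duality calculation of Lemma \ref{lemma:banach_robin} almost verbatim: write $\mathbf{h}=-\nabla\cdot(\eta D(\mathbf{v}))+\nabla p$ with $(\mathbf{v},p):=S_D'(\mathbf{f})(\mathbf{h})$, so that $\mathbf{v}$ is divergence-free and vanishes on $\partial\Omega$, integrate by parts twice against $\boldsymbol{\varphi}$, and insert the adjoint pressure through $0=(\nabla\cdot\mathbf{v},\varphi_p)=-(\mathbf{v},\nabla\varphi_p)$. In the Dirichlet case every boundary integral disappears immediately because $\mathbf{v}=\boldsymbol{\varphi}=\mathbf{0}$ on $\partial\Omega$; there is no traction or slip term to manage, so this is even easier than the Robin case. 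I do not anticipate a genuine obstacle here --- the only points requiring attention are tracking the vanishing second component of $M^*\mathbf{w}$ through $S_D'(\mathbf{f})^{\#}$ and the sign convention for $\varphi_p$; everything else is the routine adjoint-of-a-composition identity.
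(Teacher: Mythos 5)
Your argument is correct and is essentially the paper's own: the paper proves this proposition simply by declaring it analogous to Proposition \ref{prop:adj-Robin}, i.e.\ by the same chaining $A_D'(\mathbf{f})^* = I\circ S_D'(\mathbf{f})^{\#}\circ M^*$ with $M^*\mathbf{w}=(\mathbf{w},0)$ that you spell out, including the sign relabelling of $\varphi_p$. No gaps; your version just makes the omitted details explicit.
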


\begin{proof}
Analogous to the proof of Proposition \ref{prop:adj-Robin}.
\end{proof}

\section{Numerical results} \label{Sec:Numerics}
Finally we apply our results to evaluate synthetic data from simulations as well as measured data from experiments for both actomyosin systems. \\
The simulations were conducted using the Finite Element library \textit{NGSolve} \cite{ngsolve2, ngsolve1}. The discretization method used is based on a H(div)-conforming finite element space and a Hybrid Discontinuous Galerkin (HDG) formulation of the viscous forces using a method proposed in \cite{Lehrenfeld2018} and \cite{Leh16}.
The force was reconstructed from the velocity data using the Landweber iteration method introduced by Landweber in \cite{Landweber1951} for the linear and Hanke, Neubauer and Scherzer in \cite{Han95} for the non-linear case, which in our setting reads as
\begin{equation}
\label{reduced}
\mathbf{f}_{k+1} = \mathbf{f}_k - s A^{*}\left(A(\mathbf{f}_k)-\mathbf{v}^{\operatorname{meas}}\right), \ \ k = 1, 2, 3,\dots
\end{equation}
with constant step size $s>0$.
This is the regularized reconstruction based on the reduced formulation of the forward operator; alternatively one can consider an all-at-one setting \cite{Kaltenbacher:17, Tram19} or a bi-level framework \cite{Tram24, Tram25}, and apply faster regularization techniques such as, e.g., 
the Nesterov-Landweber method \cite{Neubauer17}, or sequential subspace optimization \cite{bhw20,ws16}. We will denote the reconstructed force as $\mathbf{f}^*$.

To ensure stability under noise known as regularization guarantee, the iteration was terminated a posteriori using the discrepancy principle as in \cite{Han95}. More precisely, we stop the iteration at the first time that the residual $r_k:=\norm{A(\mathbf{f}_k)-\mathbf{v}^{\text{meas}}}_\lzd$ fulfills $r_k \leq \tau \delta$ with tolerance $\tau>1$ and noise level $\delta>0$. 
Regarding an initial guess, we use the direct evaluation of the velocity term in the Stokes equation with
\begin{equation}
    \mathbf{f}_0:=- \nabla \cdot \eta D(\mathbf{v}^{\text{meas}}).
    \label{eq:f0}
\end{equation}
We directly evaluate the velocity term of the Stokes equation and $\mathbf{f}_0$ therefore acts as a naive reconstruction.
As described in section \ref{sec:data_aq}, the measured velocity  $\mathbf{v}^{\operatorname{meas}}$ is obtained from the raw video data using PIV. 
Here, the flow is obtained from averaging over the length of the microscopy image series instead of only comparing a small subset of subsequent frames. While this leads to smoother input data $\mathbf{v}^{\operatorname{meas}}$, the naive reconstruction $\mathbf{f}_0$ can still be relatively poor depending on the numerical stability of the used differentiation algorithm as we will see in the next section.

The constant step-size $s$ was determined prior to the iteration using backtracking line search (bls).
The full algorithm is shown in detail in Algorithm \ref{alg:landweber}.

\begin{algorithm}[H]
\caption{Landweber iteration}\label{alg:landweber}
\begin{algorithmic}
\Require Noise level $\delta$, tolerance $\tau$, input data $\mathbf{v}^{\operatorname{meas}}$
\State $\mathbf{f}_0 = - \nabla \cdot \eta D(\mathbf{v}^{\text{meas}})$
\State $s=\operatorname{bls}(\mathbf{f}_0,\mathbf{v}^\text{meas})$
\State $r_{0} = \|A\mathbf{f}_0-\mathbf{v}^\text{meas}\|_{\lzd}$
\State $\mathbf{f} \gets \mathbf{f}_0$
\While{$r_{k}> \tau\delta $ }
    \State $\mathbf{v} = A(\mathbf{f})$ \Comment{$A$ as in \eqref{eq:A_def}}
    \State $r_{k-1}\gets r_k$
    \State $r_k \gets \norm{\mathbf{v}-\mathbf{v}^{\text{meas}}}_\lzd$
    \State $\boldsymbol{\varphi} = S'(\mathbf{f})^{\#}\left(\mathbf{v} - \mathbf{v}^{\text{meas}},0\right)$ \Comment{$S'(\mathbf{f})^{\#}$ as in Lemmas \ref{lemma:banach_robin} and \ref{lemma:SD_Banach}} 
    \State $\mathbf{g} = I(\boldsymbol{\varphi})$ \Comment{$I$ as in \eqref{eq:Riesz}}
    \State $\mathbf{f} \gets \mathbf{f} - s \mathbf{g}$
\EndWhile
\State \Return $\mathbf{f}^* \gets \mathbf{f}$ 
\end{algorithmic}

\end{algorithm}


To evaluate the numerical results, we define the relative noise level $\tilde{\delta}$ and relative reconstruction error $\tilde{\varepsilon}$ as
\begin{displaymath}
    \tilde{\delta} :=\frac{\norm{\mathbf{v}^{\operatorname{meas}}-A(\mathbf{f}^\dagger)}_\lzd }{\norm{A(\mathbf{f}^\dagger)}_{\lzd}}, \ 
    \tilde{\varepsilon} :=\frac{\norm{\mathbf{f}^*-\mathbf{f}^{\dagger}}_\lzd }{\norm{\mathbf{f}^\dagger}_{\lzd}} .
\end{displaymath}
with ground truth force $\mathbf{f}^\dagger$.

\subsection{Regularized reconstruction of forces from synthetic data\label{sec:num_syntetic}}

We first show some examples of reconstructions of forces from simulated velocity fields.
As explained in section \ref{sec:helmholtz}, the volume that the network takes up is insignificant compared to the volume of the surrounding fluid, and we assume that $\nabla \eta = 0$, i.e., that the viscosity is constant.
For the synthetic data the viscosity $\eta$ as well as the boundary parameter $\lambda$ in the Robin version of the Stokes problem were chosen as one. Divergence-free forces are chosen as the ground truth, to enable full comparability to the reconstructions.


\subsubsection{Regularized vs.~naive reconstruction}

At first we compare the reconstructions of forces from noiseless data to the initial guess $\mathbf{f}_0$, which is calculated using the differentiation tools provided by \textit{NGSolve} according to \eqref{eq:f0}. In the absence of noise, we set the stopping criterion according to the monotonicity of the residual by setting the threshold $\tilde{\tau}$, so that the iteration stops when the following criteria is met
\begin{displaymath}
   \left(\norm{A(\mathbf{f}_{k-1})-\mathbf{v}^{\operatorname{meas}}}_\lzd - \norm{A(\mathbf{f}_{k})-\mathbf{v}^{\operatorname{meas}}} _\lzd\right) < \tilde{\tau} \norm{\mathbf{v}^{\operatorname{meas}}}_\lzd. 
\end{displaymath}
For noisy measurements the iteration was additionally stopped using the monotonicity of the residual using machine precision as the threshold.

Figure \ref{fig:droplet_syntetic} shows solutions of the inverse problem for $A_R$ on a $2$-dimensional disk corresponding to the experimental droplets. The noise-free velocity field $\mathbf{v}^{\operatorname{meas}}$, the ground truth force $\mathbf{f}^{\dagger}$, the naive reconstruction $\mathbf{f}_0$ and the reconstruction of the force are depicted. The relative error is $\tilde{\varepsilon}=0.3$ for $\tilde{\tau}=10^{-5}$.
While the velocity generally follows the force and is therefore very similar in direction, the Landweber iteration also recovers the swirls on the lower left. 

Figure \ref{fig:bulk_syntetic} shows an example of the reconstruction of the force on a rectangular domain using Dirichlet boundary conditions corresponding to solving the inverse problem for $A_D$. Similarly to the Robin boundary version the velocity here follows the direction of the force. However, the reconstruction produces some errors close to the boundary, leading to a relative error $\tilde{\varepsilon} = 0.73$ for $\tilde{\tau}=10^{-4}$.

\begin{figure}[H]
    \begin{subfigure}[t]{0.5\textwidth}
        \centering
        \includegraphics[width=\linewidth]{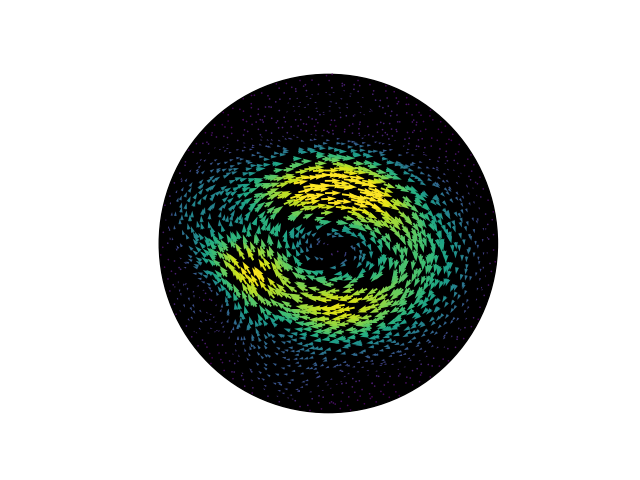}
       \caption{Input data $\mathbf{v}^{\text{meas}}$} 
        \label{fig:droplet_synthetic_v_real}
    \end{subfigure}
    \begin{subfigure}[t]{0.5\textwidth}
        \centering
        \includegraphics[width=\linewidth]{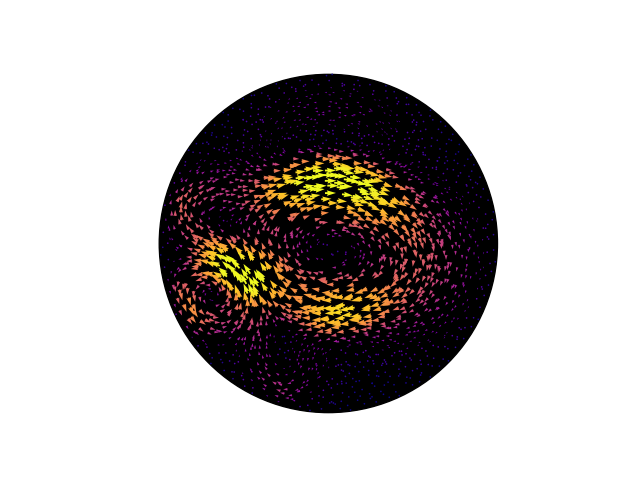}
        \caption{Ground truth force $\mathbf{f}^{\dagger}$}
        \label{fig:droplet_synthetic_f_real}
    \end{subfigure}\\
    \begin{subfigure}[t]{0.5\textwidth}
        \centering
        \includegraphics[width=\linewidth]{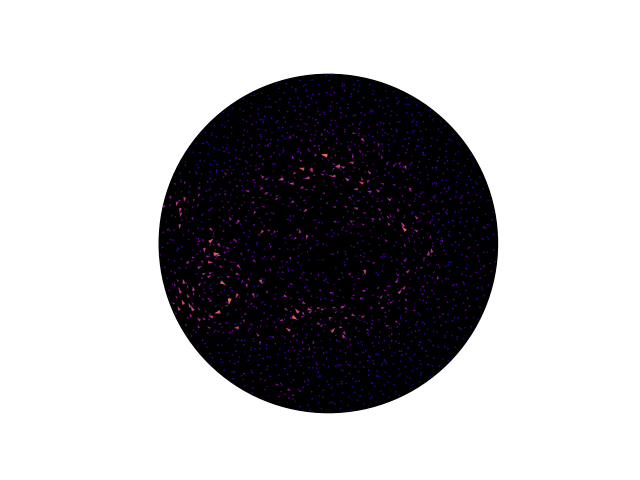}
        \caption{Naive reconstruction $\mathbf{f}_0$}
    \end{subfigure}
    \begin{subfigure}[t]{0.5\linewidth}
        \centering
        \includegraphics[width=\linewidth]{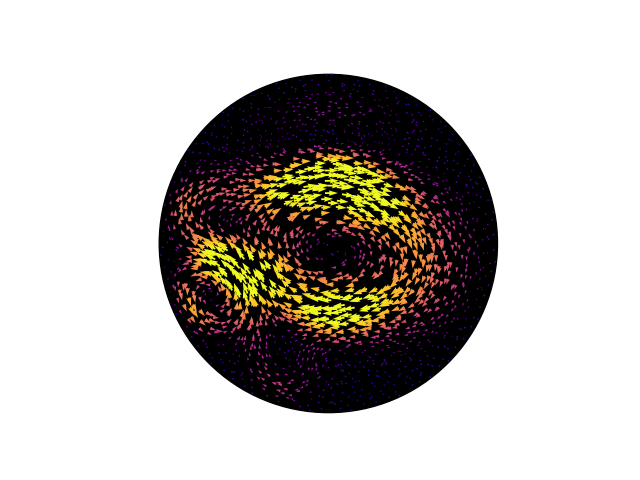}
        \caption{Reconstructed force $\mathbf{f}^*$}
    \end{subfigure}\\
    \begin{subfigure}[t]{0.5\linewidth}
        \centering
        \includegraphics[width=\linewidth]{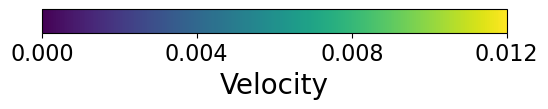}
    \end{subfigure}
    \begin{subfigure}[t]{0.5\linewidth}
        \centering
        \includegraphics[width=\linewidth]{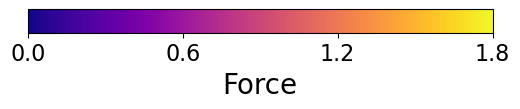}
    \end{subfigure}
    \caption{Velocity field $\mathbf{v}^{\operatorname{meas}}$, ground truth $\mathbf{f}^\dagger$, initial guess $\mathbf{f}_0$ and reconstruction of force $\mathbf{f}^*$ for the operator $A_R$ with Robin boundary conditions.}
    \label{fig:droplet_syntetic}
\end{figure}

\begin{figure}[H]
    \begin{subfigure}[t]{0.5\textwidth}
        \centering
        \includegraphics[width=\linewidth]{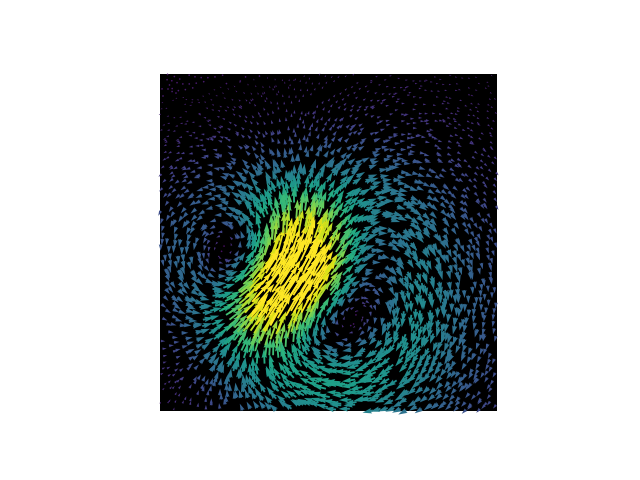}
       \caption{Input data $\mathbf{v}^{\text{meas}}$} 
    \end{subfigure}
    \begin{subfigure}[t]{0.5\textwidth}
        \centering
        \includegraphics[width=\linewidth]{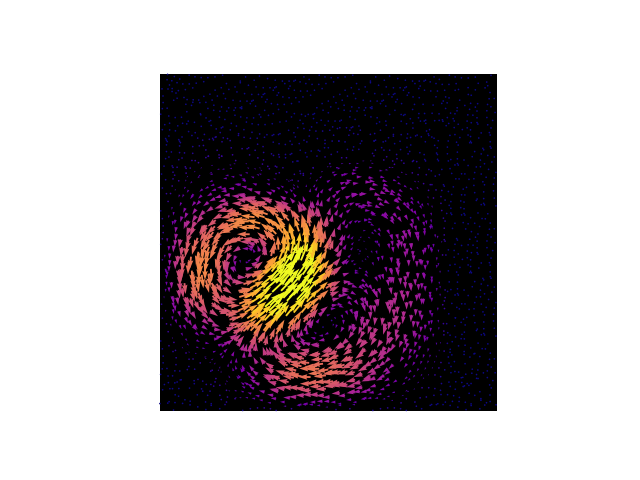}
        \caption{Ground truth $\mathbf{f}^{\dagger}$}
    \end{subfigure}\\
    \begin{subfigure}[t]{0.5\textwidth}
        \centering
        \includegraphics[width=\linewidth]{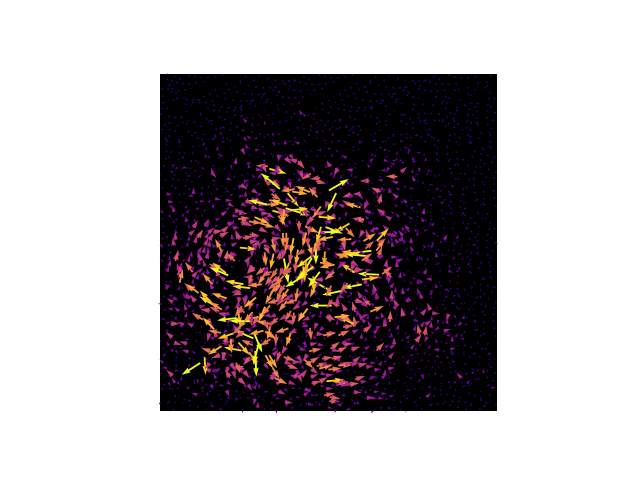}
        \caption{Naive reconstruction $\mathbf{f}_0$}
    \end{subfigure}
    \begin{subfigure}[t]{0.5\linewidth}
        \centering
        \includegraphics[width=\linewidth]{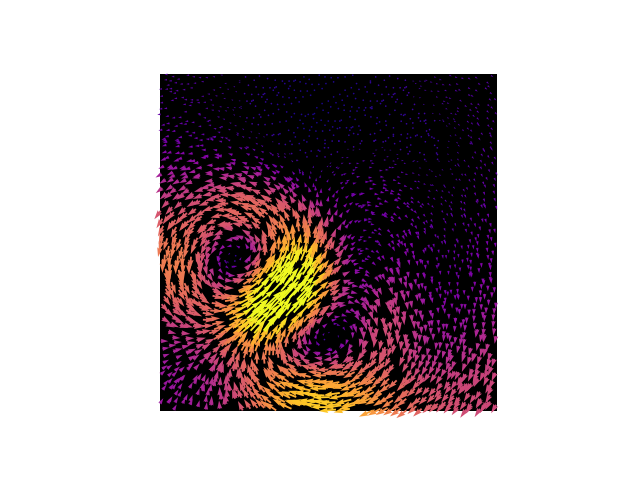}
        \caption{Reconstructed force $\mathbf{f}^*$}
    \end{subfigure}
    \begin{subfigure}[t]{0.5\linewidth}
        \includegraphics[width=\linewidth]{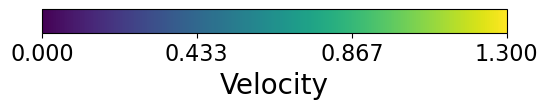}
    \end{subfigure}
    \begin{subfigure}[t]{0.5\linewidth}
        \centering
    \includegraphics[width=\linewidth]{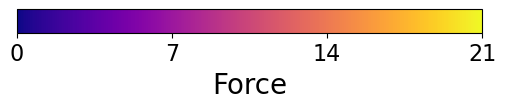}
    \end{subfigure}
    \caption{Velocity field $\mathbf{v}^{\operatorname{meas}}$, ground truth $\mathbf{f}^\dagger$, initial guess $\mathbf{f}_0$ and reconstructed force $\mathbf{f}^*$ for the operator $A_D$ with Dirichlet boundary conditions. }
    \label{fig:bulk_syntetic}
\end{figure}

\subsubsection{Reconstructions of forces from noisy synthetic data}
The reconstruction of the force was also tested on velocity data of this dataset with uniform random noise for the Robin version of the problem with $\tau=1.01$. Three examples of noisy velocity fields and corresponding reconstructions are plotted in figure \ref{fig:droplet_noisy}. 
 
\begin{table}[H]
    \begin{tabular}{l|lllllllll}
        Rel. noise level $\tilde{\delta}$ & 0.15&0.3&0.45&0.62&0.76&0.91&1.1&1.2&1.37\\
       Rel. reconstruction error $\tilde{\varepsilon}$ &0.2&0.21&0.23&0.22&0.25&0.27&0.26&0.28&0.29\\
    \end{tabular}
\end{table}

 While the structure and magnitude of the force are retained relatively well even for high noise levels, some details of the force get lost as the noise level increases. 

\begin{figure}[H]
    \begin{subfigure}[t]{0.5\textwidth}
        \centering
        \includegraphics[width=\linewidth]{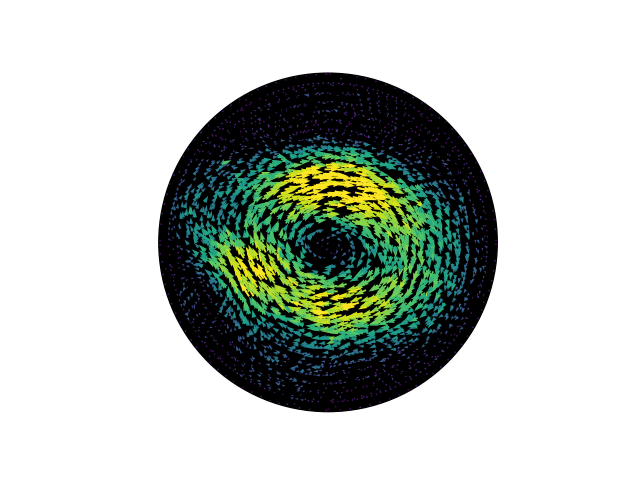}
       \caption{Noisy velocity $\mathbf{v}^{\text{meas}}$ with $\tilde{\delta}=0.15$} 
    \end{subfigure}
    \begin{subfigure}[t]{0.5\textwidth}
        \centering
        \includegraphics[width=\linewidth]{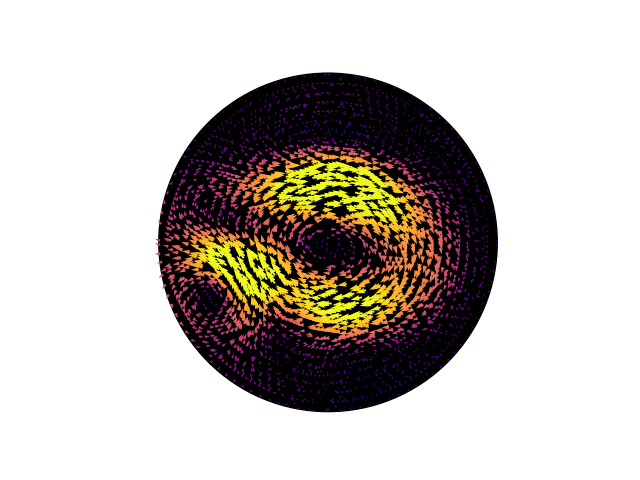}
        \caption{Reconstructed force $\mathbf{f}^*$ with $\tilde{\varepsilon}=0.23$.}
    \end{subfigure}\\
    \begin{subfigure}[t]{0.5\textwidth}
        \centering
        \includegraphics[width=\linewidth]{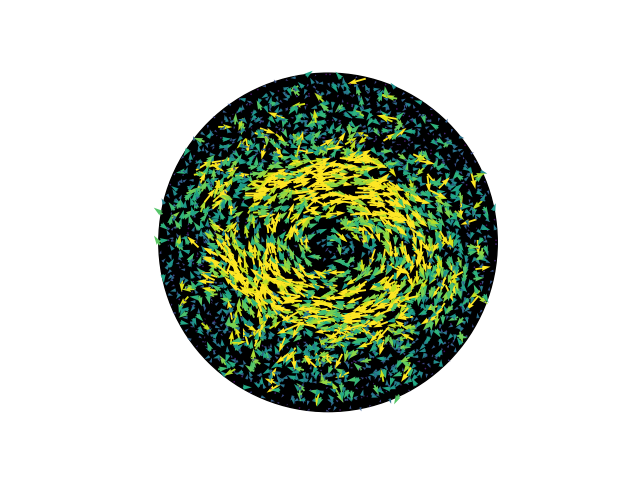}
        \caption{Noisy velocity $\mathbf{v}^{\text{meas}}$ with $\tilde{\delta}=0.76$}
    \end{subfigure}
    \begin{subfigure}[t]{0.5\linewidth}
        \centering
        \includegraphics[width=\linewidth]{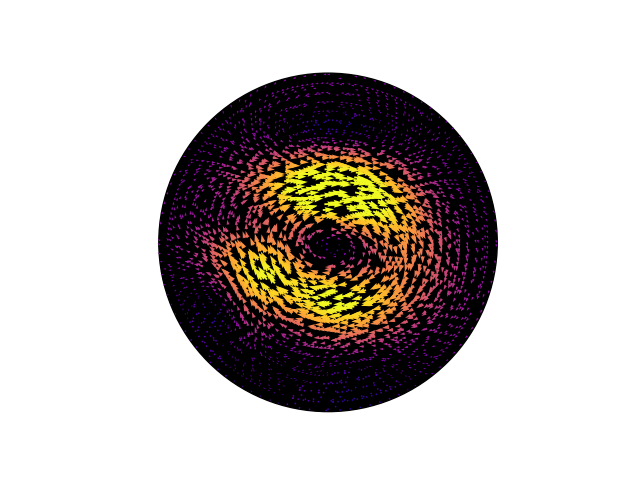}
        \caption{Reconstructed force $\mathbf{f}^*$ with $\tilde{\varepsilon}=0.25$.}
    \end{subfigure}\\
        \begin{subfigure}[t]{0.5\textwidth}
        \centering
        \includegraphics[width=\linewidth]{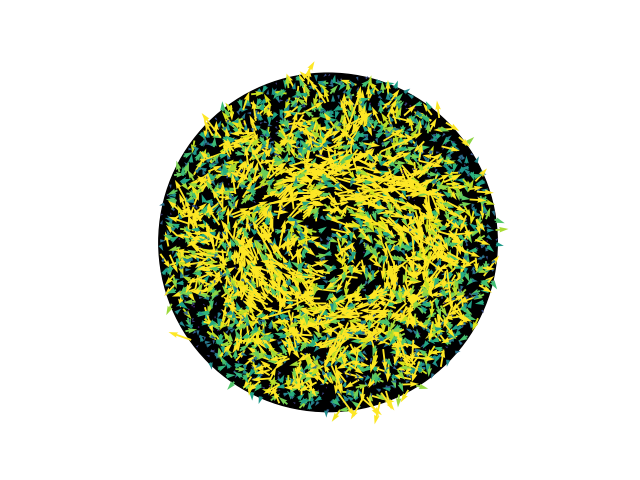}
        \caption{Noisy velocity $\mathbf{v}^{\text{meas}}$ with $\tilde{\delta}=1.37$}
    \end{subfigure}
    \begin{subfigure}[t]{0.5\linewidth}
        \centering
        \includegraphics[width=\linewidth]{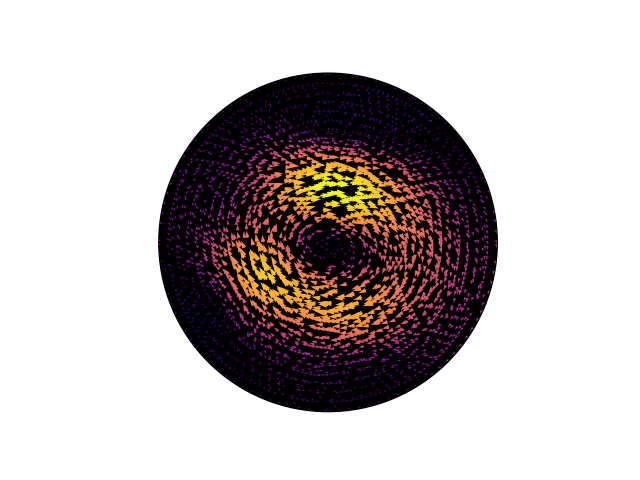}
        \caption{Reconstructed force $\mathbf{f}^*$ with relative reconstruction error $\tilde{\varepsilon}=0.29$.}
    \end{subfigure}\\
    \begin{subfigure}[t]{0.5\linewidth}
        \centering
        \includegraphics[width=\linewidth]{figures/reconstructions/droplet_noisy_1_v_noisy_cbar.png}
    \end{subfigure}
    \begin{subfigure}[t]{0.5\linewidth}
        \centering
        \includegraphics[width=\linewidth]{figures/reconstructions/droplet_noisy_1_f_pred_cbar.png}
    \end{subfigure}
    \caption{Reconstructions of forces (right) from velocity fields with uniform random noise and differing noise levels (left).}
    \label{fig:droplet_noisy}
\end{figure}

\subsection{Reconstructions from experimental data} \label{Sec:ExpData}

The reconstruction algorithm \ref{alg:landweber} was also tested on experimental data for both versions of the Stokes problem, the bulk phase and actomyosin networks confined in a droplet. In this section the vector fields are plotted over an image of the actomyosin network at a midway point of the measured video. 

In the experimental droplets and bulk measurements, the fluid containing the acto-myosin network is almost exclusively deionized water. Additionally, the network only takes up a small part of the absolute volume of the solution as discussed in section \ref{sec:helmholtz}. Therefore, we will also use a constant viscosity $\eta(\mathbf{x}) \equiv 1$ for the experimental data. 


\begin{figure}[H]
    \begin{subfigure}[t]{0.5\textwidth}
        \centering
        \includegraphics[width=\linewidth]{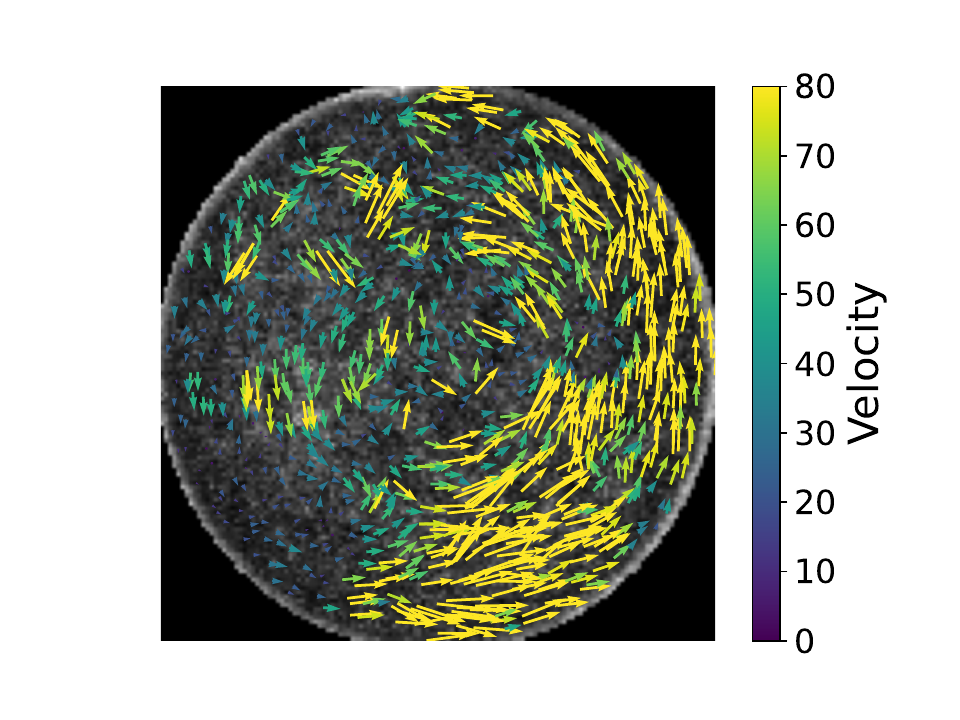}
       \caption{Measured velocity flow field $\mathbf{v}^{\text{meas}}$ in $[\frac{\mu m}{\operatorname{min}}]$} 
        \label{fig:v_real_droplet}
    \end{subfigure} 
    \begin{subfigure}[t]{0.5\textwidth}
        \centering
        \includegraphics[width=\linewidth]{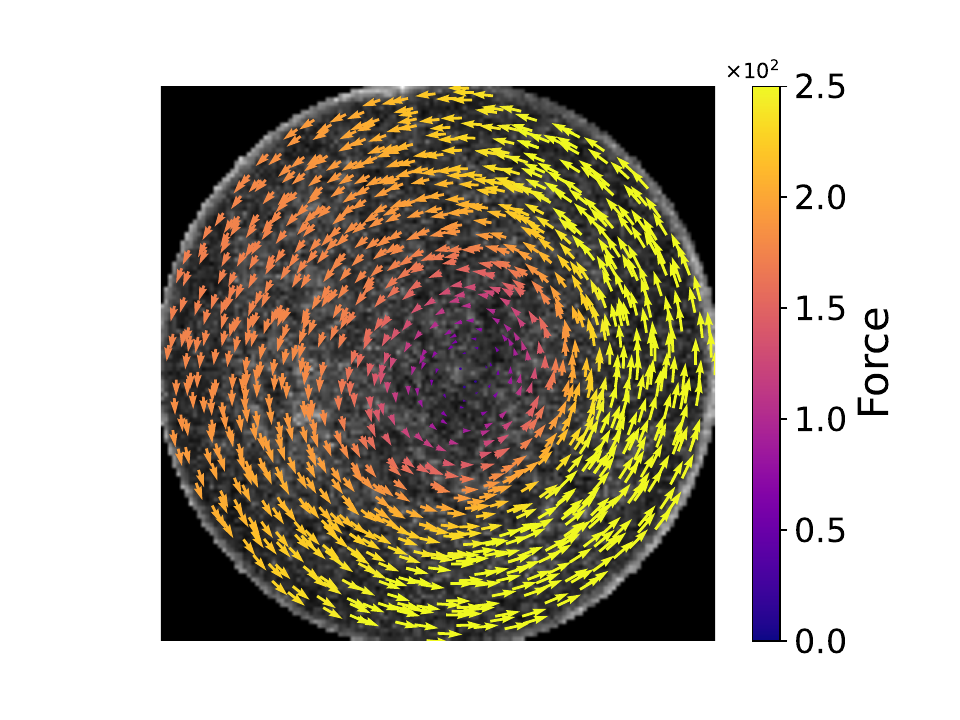}
        \caption{Reconstructed force $\mathbf{f}^*$ in $[\frac{pN}{(\mu m)^3}]$}
        \label{fig:f_real_droplet}
    \end{subfigure}
    \caption{Reconstruction of a force causing the observed flow of an actin-myosin network encapsulated in a droplet, corresponding to the inverse problem for the operator $A_R$.}
    \label{fig:droplet_experimental}
\end{figure}

Figure \ref{fig:droplet_experimental} shows the measured velocity field and the reconstruction of the force exerted by an active acto-myosin network encapsulated in a droplet. Robin boundary conditions were used to represent the oil-water boundary.
The data preparation process is discussed in section \ref{sec:data_aq} and the velocity field obtained by PIV from fluorescence microscopy videos is shown in figure \ref{fig:v_real_droplet}. Since the velocity field $\mathbf{v}^{\operatorname{meas}}$ obtained from the video data is given on a rectangular grid, it is here first interpolated onto a rectangular finite element mesh and then transferred to a circular mesh. This effectively ``cuts off'' the corners, e.g., the area of the measurement field outside the confinements of the droplet.  
The resulting reconstruction of the divergence-free component with relative noise level $\tilde{\delta}=0.43$ and $\tau = 1.7$ of the force is shown in \ref{fig:f_real_droplet}. 
As expected, the flow inside the droplet follows the direction of the divergence-free force.

\begin{figure}[H]
    \begin{subfigure}[t]{0.5\textwidth}
        \centering
        \includegraphics[width=\linewidth]{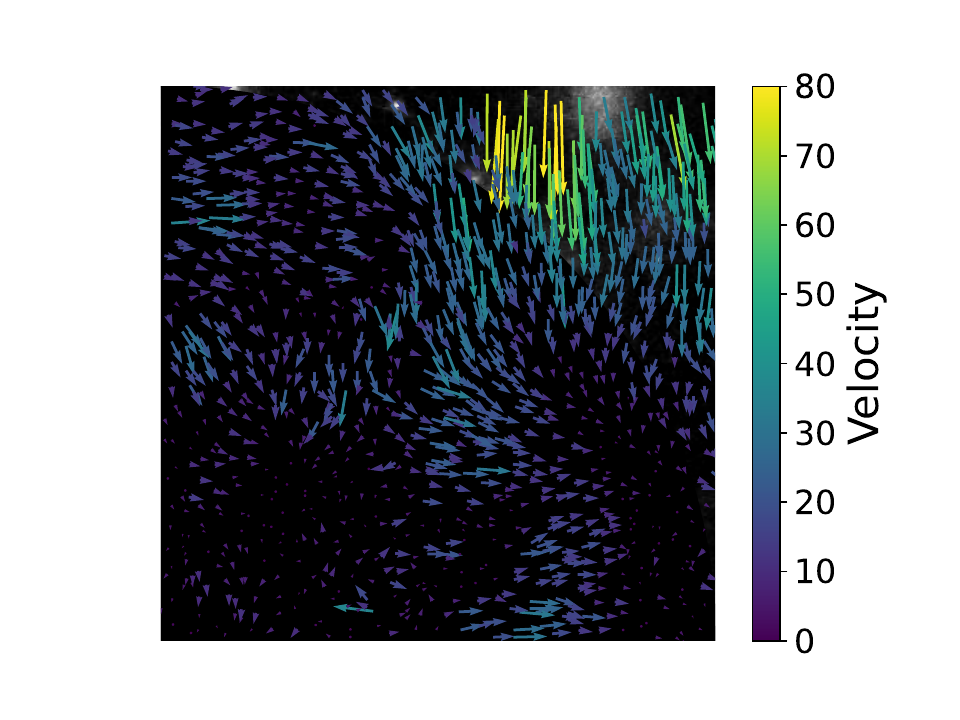}
       \caption{Input data $\mathbf{v}^{\text{meas}}$ in $[\frac{\mu m}{\operatorname{min}}]$} 
        \label{fig:v_bulk_45_real}
    \end{subfigure} 
    \begin{subfigure}[t]{0.5\textwidth}
        \centering
    \includegraphics[width=\linewidth]{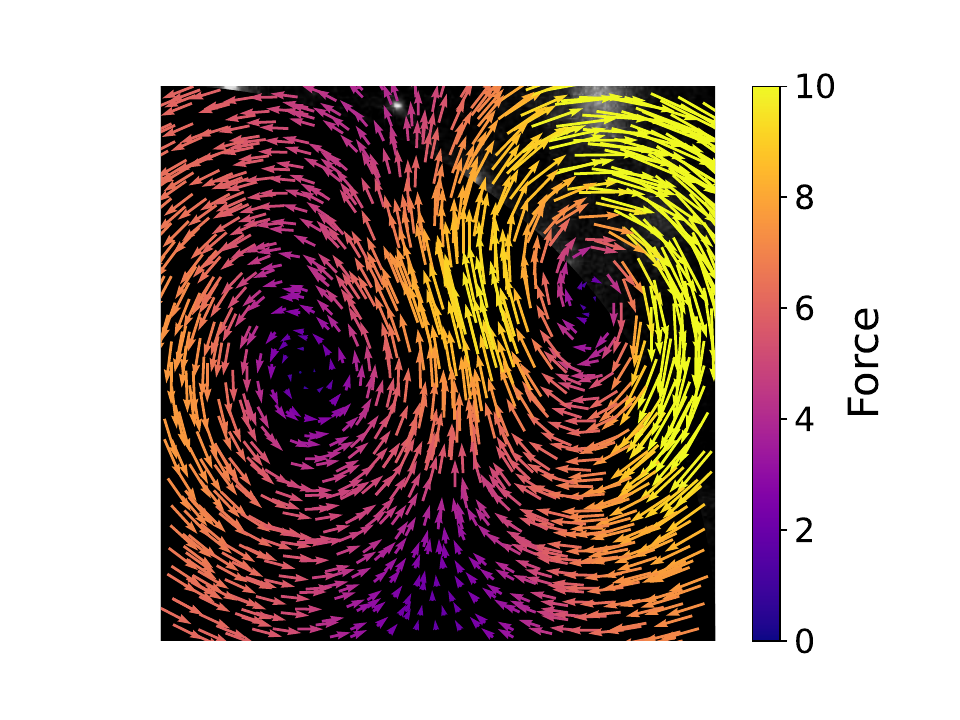}
        \caption{Reconstructed force $\mathbf{f}^*$ in $[\frac{p N}{ (\mu m)^3}]$}
        \label{fig:f_bulk_45_real}
    \end{subfigure}
    \caption{Experimentally obtained velocity field of the flow of a bulk actin-myosin network and reconstructed force $\mathbf{f}^*$ using $A_D$. }
    \label{fig:bulk_experimental_45}
\end{figure}

\begin{figure}[H]
    \begin{subfigure}[t]{0.5\textwidth}
        \centering        \includegraphics[width=\linewidth]{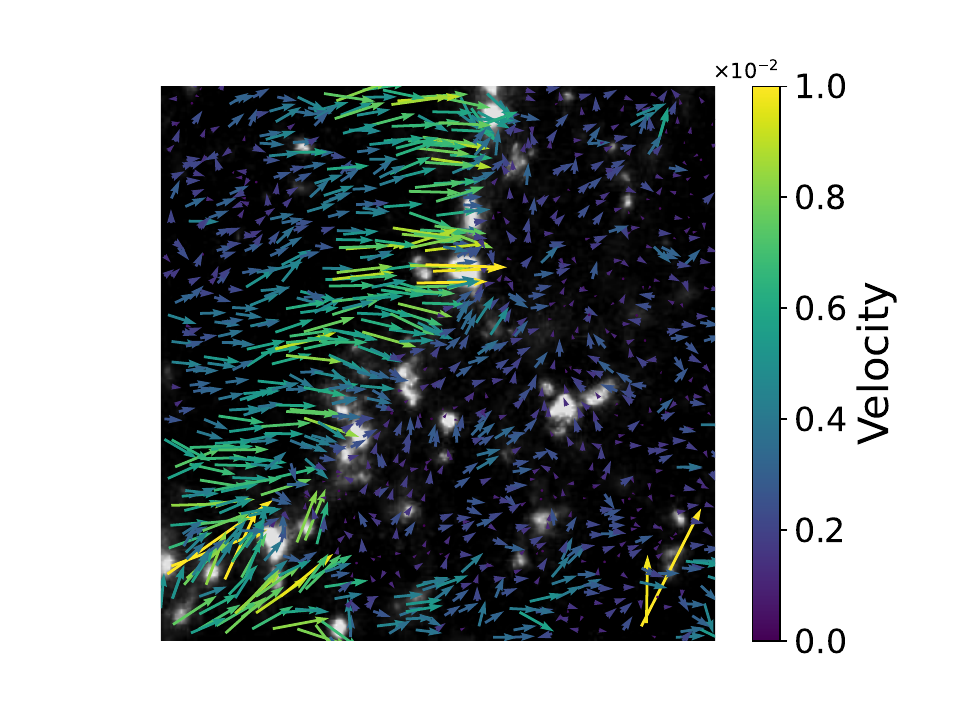}
       \caption{Input data $\mathbf{v}^{\text{meas}}$ in $[\frac{\mu m}{\operatorname{min}}]$} 
        \label{fig:v_bulk_40_real}
    \end{subfigure} 
    \begin{subfigure}[t]{0.5\textwidth}
        \centering
    \includegraphics[width=\linewidth]{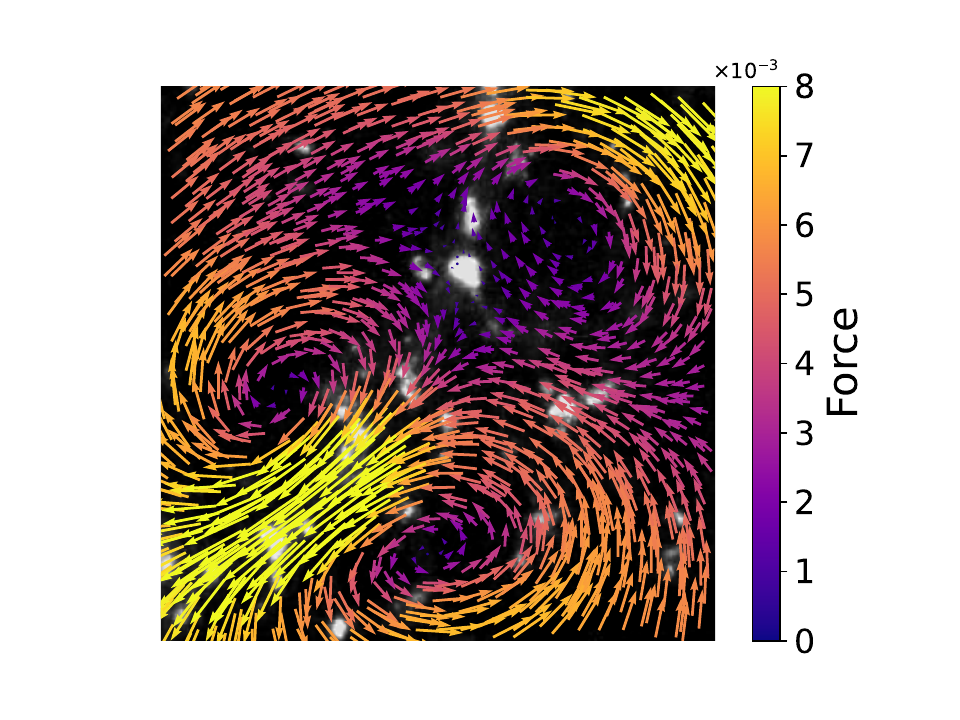}
        \caption{Reconstructed force $\mathbf{f}^*$ in $[\frac{\mu N}{(\mu m)^3}]$}
        \label{fig:f_bulk_40_real}
    \end{subfigure}
    \caption{Experimentally obtained velocity field $\mathbf{v}^{\operatorname{meas}}$ of the flow of a bulk actin-myosin network and reconstructed force $\mathbf{f}^*$ using $A_D$.}
    \label{fig:bulk_experimental_40}
\end{figure}

In Figure \ref{fig:bulk_experimental_45} the reconstruction method was tested on velocity data of a network that was not encapsulated, corresponding to the version of the Stokes problem with Dirichlet boundary conditions. The relative noise level was set to $\tilde{\delta}=0.41$ and the tolerance to $\tau=1.7$.

Figure \ref{fig:bulk_experimental_40} shows the actin velocity flow of a different section of the sample shown in figure \ref{fig:bulk_experimental_45} and its corresponding reconstructed force $\mathbf{f}^*$.
Here the velocity field shows contraction of the actin network, resulting in the sharp decline in movement in the left half of the measured area. For the reconstruction, we set the relative noise level to $\tilde{\delta}=0.49$ and $\tau=1.7$. 

We expect that the missing information on the irrotational component of the active forces is more significant in the bulk system than in the droplet scenario, since an encapsulation typically means that rotations are more prominent. For this reason, the reconstruction is more easily related to the measured velocity field in the droplet case than in the bulk case. In addition, it is not clear that the observed flow originates from the activity of the actomyosin or from an interaction of the fluid with the ambient outside fluid.

\section{Discussion and outlook} \label{Sec:5}

The aim of this work is to provide mathematical tools to identify forces exerted by an active actomyosin network in a fluid. 
We used the Stokes equation together with an incompressibility condition and suitable boundary value conditions to reflect the physical setting to describe the fluid flow, which is driven by the activity of the actomyosin network. The respective force appears as a source term in the Stokes equation. 
To this end, we formulated a mathematical framework based on the assumption that the fluid can be considered Newtonian. The analysis reveals that only the divergence-free part of the force field $\mathbf{f}$ can be reconstructed from the knowledge of $\mathbf{v}$ for a known viscosity, which was also demonstrated in numerical experiments for synthetic data with a known ground truth. Furthermore, we have been able to extend our regularization algorithm to evaluate experimental data. 

Our findings thus complement a class of applied inverse source problems that aim at identifying forces, for example traction force microscopy, see, e.g., \cite{Ambrosi2006, Schwarz2015}, 
where the forces that a cell exerts on an elastic substrate are to be reconstructed to obtain insights into cell mechanics. The underlying mathematical problem is an inverse source problem for the elastic substrate, see \cite{Sarnighausen_2025}. Inverse source problems for force identification also arise in applications besides cell physics, e.g., in defect localization in composites \cite{Binder_2015} or in elastodynamics in seismology \cite{Bao_2018}.

In the future, to analyze actual three-dimensional droplets and to understand the development of active forces when the actomyosin network evolves, we will extend our model to incorporate the inherent time-dependence of actomyosin systems, and apply suitable regularization techniques. In addition, our framework will be adapted to the respective data acquisition process that is typically limited to imaging the system only within the focal plain, leading to even more restrictions in data availability.
The full reconstruction of the active force $\mathbf{f}$ inside of an actual cell requires knowledge of the viscosity $\eta$ and the pressure $p$. In further studies the aim is to substitute these quantities with approximations using experimental data and thus reconstruct the rotation-free part as an additional inverse problem.

\section{Data availability statement}
The codes and data that support the findings of this article are openly available on GRO.data~\cite{5SY2VM_2025} at \url{https://doi.org/10.25625/5SY2VM}.

\section{Acknowledgements}\label{sec:acknowledgments}
A.W., A.J., and S.K, acknowledge funding by Deutsche Forschungsgemeinschaft (DFG, German Research Foundation) – Project-ID 449750155 – RTG 2756, Projects A4, A6, A7.
E.K., A.W.~and T.N.~additionally acknowledge support by DFG CRC 1456 (Project-ID 432680300, Projects B06 and Project C04).

\appendix
\section{Appendix}

\begin{lemma}\label{lemma:div_appendix}
The following identity regarding the divergence of the velocity term in the Stokes equation holds:
    \begin{displaymath}
        \nabla \cdot ( -  \nabla \cdot (\eta D(\mathbf{v}))) 
        = - \left( 2 \nabla \eta \cdot \Delta \mathbf{v} + D(\mathbf{v}) : \nabla(\nabla\eta) \right).
    \end{displaymath}
    As a consequence, the right hand side vanishes for constant viscosity $\eta$.
\end{lemma}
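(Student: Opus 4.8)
The plan is to establish the identity by a direct pointwise calculation in Cartesian coordinates (understood, as in Section \ref{Sec:2}, in the distributional sense), using only the Leibniz rule, the interchangeability of second-order partial derivatives, the symmetry $D(\mathbf{v})=D(\mathbf{v})^{T}$, and the incompressibility constraint $\nabla\cdot\mathbf{v}=0$; no integration by parts is needed. I write $D_{ij}:=D(\mathbf{v})_{ij}$ and follow the conventions of Section \ref{Sec:2} for $\nabla\mathbf{v}$ and for the divergence of a matrix field, so that $[\nabla\cdot(\eta D(\mathbf{v}))]_{j}=\sum_{i}\partial_{i}(\eta D_{ij})$.

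First I would record the one consequence of incompressibility on which the whole computation rests: the relation $\nabla\cdot D(\mathbf{v})=\Delta\mathbf{v}$ recalled in Section \ref{Sec:2} (which already incorporates $\nabla\cdot\mathbf{v}=0$), together with the symmetry of $D(\mathbf{v})$, gives $\sum_{i}\partial_{i}D_{ij}=\Delta v_{j}$ and $\sum_{j}\partial_{j}D_{ij}=\Delta v_{i}$. Applying $\partial_{i}(\eta D_{ij})=(\partial_{i}\eta)D_{ij}+\eta\,\partial_{i}D_{ij}$ and summing over $i$ then yields the vector identity $\nabla\cdot(\eta D(\mathbf{v}))=\eta\,\Delta\mathbf{v}+D(\mathbf{v})\nabla\eta$, i.e.\ the non-constant-viscosity version of the relation in Section \ref{Sec:2}. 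Taking the divergence of this identity and splitting it as $\nabla\cdot(\eta\,\Delta\mathbf{v})+\nabla\cdot(D(\mathbf{v})\nabla\eta)$, I would compute the first summand as $\nabla\eta\cdot\Delta\mathbf{v}+\eta\,\Delta(\nabla\cdot\mathbf{v})=\nabla\eta\cdot\Delta\mathbf{v}$, and the second, expanded in coordinates as $\sum_{i,j}\big[(\partial_{j}D_{ij})(\partial_{i}\eta)+D_{ij}\,\partial_{i}\partial_{j}\eta\big]$, as $\nabla\eta\cdot\Delta\mathbf{v}+D(\mathbf{v}):\nabla(\nabla\eta)$, once more invoking $\sum_{j}\partial_{j}D_{ij}=\Delta v_{i}$. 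Adding the two summands gives $\nabla\cdot(\nabla\cdot(\eta D(\mathbf{v})))=2\,\nabla\eta\cdot\Delta\mathbf{v}+D(\mathbf{v}):\nabla(\nabla\eta)$; multiplying by $-1$ is the claimed formula, and the asserted consequence follows at once because both $\nabla\eta$ and its Hessian $\nabla(\nabla\eta)$ vanish identically when $\eta$ is constant.

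I expect the main obstacle to be bookkeeping rather than any genuine difficulty: one must keep the transposed-index convention for $\nabla\mathbf{v}$, the ``divergence over the first index'' convention for matrix fields, and the symmetry of $D(\mathbf{v})$ mutually consistent, so that the two mixed terms --- one derivative on $\eta$, one on $D(\mathbf{v})$ --- genuinely combine into the coefficient $2$ in front of $\nabla\eta\cdot\Delta\mathbf{v}$, and so that the ``$:$''-contraction with the Hessian is the same object as in \eqref{eq:divv}. A convenient way to organise the computation, and a built-in sanity check, is to treat the case $\eta\equiv\hat\eta$ constant first --- where the statement degenerates to $\nabla\cdot(\hat\eta\,\Delta\mathbf{v})=\hat\eta\,\Delta(\nabla\cdot\mathbf{v})=0$ --- and only afterwards reinstate the $\nabla\eta$- and $\nabla(\nabla\eta)$-terms. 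If one insists on the standing $\eta\in L^{\infty}(\Omega)$ regularity, $\nabla\eta$ and $\nabla(\nabla\eta)$ are to be read distributionally; since $\eta$ is taken constant in all applications considered in this paper, this subtlety does not affect the use of the lemma.
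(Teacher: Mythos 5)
Your proposal is correct and follows essentially the same route as the paper's own proof: apply the product rule to get $\nabla\cdot(\eta D(\mathbf{v}))=D(\mathbf{v})\nabla\eta+\eta\,\Delta\mathbf{v}$ (using the relation $\nabla\cdot D(\mathbf{v})=\Delta\mathbf{v}$ for incompressible $\mathbf{v}$ as stated in Section~\ref{Sec:2}), then take the divergence of each summand, using $\nabla\cdot\Delta\mathbf{v}=\Delta(\nabla\cdot\mathbf{v})=0$ for the $\eta\,\Delta\mathbf{v}$ term and the coordinate expansion for the $D(\mathbf{v})\nabla\eta$ term, so the two mixed terms combine into $2\,\nabla\eta\cdot\Delta\mathbf{v}$. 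Your added remarks on conventions and the constant-$\eta$ sanity check are consistent with the paper's computation and do not change the argument.
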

\begin{proof}
Using the product rule, the divergence can be rewritten as
    \begin{equation}\label{eq:div_appendix_1}
    \nabla \cdot (\nabla \cdot (\eta D(\mathbf{v})) )
     = \nabla \cdot ( D(\mathbf{v}) \cdot \nabla \eta   + \eta \Delta \mathbf{v} )
    \end{equation}
The first term of \eqref{eq:div_appendix_1} then becomes
\begin{displaymath}
    \nabla \cdot (D(\mathbf{v}) \cdot \nabla \eta   ) 
= \Delta \mathbf{v} \cdot \nabla \eta  + D(\mathbf{v}) : \nabla (\nabla \eta) 
\end{displaymath}
and the second term is 
\begin{displaymath}
    \nabla \cdot ( \eta \ \Delta \mathbf{v} ) 
= \Delta \mathbf{v} \cdot \nabla \eta + \eta \nabla \cdot ( \Delta \mathbf{v}) 
=  \Delta \mathbf{v} \cdot \nabla \eta ,
\end{displaymath}
where $ \nabla \cdot \Delta \mathbf{v}  = 0$ with Graßmann's identity and the incompressibility of $\mathbf{v}$. 
\end{proof}

\vspace*{2ex}

\bibliographystyle{plain}
\bibliography{bibliography}

\pagestyle{myheadings}
\thispagestyle{plain}

\end{document}